\newcommand{\comment}[1]{}
\newcommand{\E}{\mathbb{E}}
\newcommand{\R}{\mathbb{R}}
\newcommand{\N}{\mathbb{N}}
\newcommand{\PP}{\mathbb{P}}
\newcommand{\Zs}{Z_{\sigma_n}}
\newtheorem{theorem}{Theorem}
\newtheorem{proposition}{Proposition}
\newtheorem{ass}{Assumption}
\newtheorem{lemma}{Lemma}
\newtheorem{definition}{Definition}
\newtheorem{remark}{Remark}
\newtheorem{example}{Example}
\newtheorem{corollary}{Corollary}
\title{Optimal design of the Barker proposal and other locally-balanced Metropolis--Hastings algorithms}
\author[1]{Jure Vogrinc\footnote{\texttt{jure.vogrinc@warwick.ac.uk}}}
\author[2]{Samuel Livingstone}
\author[3]{Giacomo Zanella}
\affil[1]{Department of Statistics, University of Warwick}
\affil[2]{Department of Statistical Science, University College London}
\affil[3]{Department of Decision Sciences, BIDSA \& IGIER, Bocconi University}
\date{}
\begin{document}

\maketitle

\begin{abstract}

We study the class of first-order locally-balanced Metropolis--Hastings algorithms introduced in \cite{livingstone2019barker}. To choose a specific algorithm within the class the user must select a balancing function $g:\R \to \R$ satisfying $g(t) = tg(1/t)$, and a noise distribution for the proposal increment.  Popular choices within the class are the Metropolis-adjusted Langevin algorithm and the recently introduced Barker proposal.  We first establish a universal limiting optimal acceptance rate of 57\% and scaling of $n^{-1/3}$ as the dimension $n$ tends to infinity among all members of the class under mild smoothness assumptions on $g$ and when the target distribution for the algorithm is of the product form.  
In particular we obtain an explicit expression for the asymptotic 
efficiency of an arbitrary algorithm in the class, as measured by
expected squared jumping distance.  We then consider how to optimise this expression under various constraints.  We derive an optimal choice of noise distribution for the Barker proposal, optimal choice of balancing function under a Gaussian noise distribution, and optimal choice of first-order locally-balanced algorithm among the entire class, which turns out to depend on the specific target distribution. 
Numerical simulations confirm our theoretical findings and in particular show that a bi-modal choice of noise distribution in the Barker proposal gives rise to a practical algorithm that is consistently more efficient than the original Gaussian version.
\end{abstract}


\section{Introduction}

Markov chain Monte Carlo algorithms are the workhorse of many contemporary statistical analyses, and an essential part of the modern data science toolkit.  
Despite many advances, however, reliable inference  using Markov chain Monte Carlo can still be a cumbersome task.  
It is common for practitioners to dedicate much effort to making careful algorithm design choices and adjusting algorithmic tuning parameters to ensure that performance is adequate for a given problem.  
Failure to do this can be catastrophic; examples for which a well-designed algorithm performs adequately but a less carefully-chosen alternative does not are ubiquitous (e.g. \cite{sherlock2010random}).

Suitable guidelines on the intelligent design and implementation of Markov chain Monte Carlo methods are therefore important.  They are not always easy to offer, however, the best choice of method can depend on the user and the problem at hand. 
In some contexts, a simpler algorithm with less need for adjustment and for which potential problems are easy to diagnose may be preferable. In others contexts, one may be comfortable with more complex methods, which can perform adequately on a larger class of problems if enough fine tuning is done.

For Metropolis--Hastings algorithms, perhaps the most celebrated guidelines concern the choice of optimal acceptance rate \citep{roberts2001optimal}. Rigorous theoretical justification for certain values tend to be restricted to the case in which dimension tends to infinity and the distribution from which samples are desired has a particular structure (such as a product form), but empirically the same values are known to be appropriate in many other settings \citep{roberts2001optimal}. The apparent lack of dependence of these optimal choices on the target distribution allows particularly simple recommendations to be offered to the user for a given algorithm.

Adaptive Markov chain Monte Carlo methods have also facilitated efficient implementation \citep{andrieu2008tutorial,roberts2009examples}. Users can implement adaptive algorithms in which algorithmic tuning parameters are automatically adjusted towards guideline values, using ideas from stochastic optimisation and controlled Markov chains.  When combined with appropriate theory, adaptive algorithms can therefore allow users to implement their chosen method on a given problem without the need for cumbersome hand-tuning. Such innovations have made it possible to develop popular tailored software packages for users of Markov chain Monte Carlo \citep{rosenthal2007amcmc,carpenter2017stan}.

Not all adaptive algorithms are created equally, however. Empirically it has long been observed that certain approaches are more sensitive to tuning than others \citep{neal2003slice}. In recent work \cite{livingstone2019barker} provided some theoretical justification for this phenomenon, in particular showing that for popular gradient-based approaches such as the Metropolis-adjusted Langevin algorithm and Hamiltonian Monte Carlo spectral gaps decay exponentially quickly to zero as the tuning parameters are perturbed from their optimal values.  By contrast, spectral gaps for the simpler random walk Metropolis decay at a polynomial rate, indicating that the algorithm is much more robust to tuning. This has a compounding effect if the tuning parameters are learned adaptively, as adaptive algorithms typically learn based on past samples from the Markov chain, and if these past samples are very poor as a result of the initial tuning parameters being sub-optimal it can mean that the learning occurs very slowly. The moral of the story is that algorithms can still perform poorly in practice even if an optimally-tuned version would in theory perform well.

These findings present a conundrum, as gradient-based algorithms are considered the state-of-the-art in Markov chain Monte Carlo to sample from continuous and smooth distributions when properly tuned. To explore the phenomenon in more detail \cite{livingstone2019barker} introduce a general class of gradient-based algorithms, termed \emph{first order locally-balanced} Metropolis--Hastings, of which the Metropolis-adjusted Langevin algorithm is a special case. Constructing a member of the class requires a Markov kernel, which can be thought of as the initial noise distribution for the transition, together with a \emph{balancing function}, which must satisfy certain properties (see Section \ref{sec:lb}).  The authors consider different choices from within the class, and in particular construct a method called \emph{the Barker proposal}. This algorithm has spectral gaps that are robust to tuning as in the random walk Metropolis. The authors also establish sufficient conditions for geometric ergodicity and some preliminary results on scaling with dimension, suggesting that relaxation times are $\mathcal{O}(n^{1/3})$, where $n$ is the dimension of the state. 
Empirical results in the paper show that the Barker algorithm pairs extremely well with adaptive learning of tuning parameters, and enables reliable sampling on complex examples in which other gradient-based methods may not, despite being remarkably simple to implement. 
More discussion and a pedagogical derivation of the Barker algorithm is provided in \cite{hird2020fresh}.

Several unexplored questions remain regarding locally-balanced Metropolis--Hastings algorithms. The initial noise distribution in the Barker algorithm is simply chosen to be Gaussian in \citep{livingstone2019barker}, but no justification besides convenience is given for this choice. It could be that a different choice leads to a more effective algorithm.  Similarly, general guidelines on the optimal acceptance rate for the Barker algorithm are not established. More generally, little discussion is provided on other first-order locally-balanced Metropolis--Hastings. It is natural to wonder whether all members of the class will exhibit $\mathcal{O}(n^{1/3})$ relaxation times, if the Metropolis--adjusted Langevin is the most efficient choice when optimally tuned, and indeed whether such a direct quantitative comparison of methods is possible in general. These questions are of both theoretical and practical interest, as they have direct implications for the optimal design of algorithms.

In this paper we make several new contributions. First we present universal results on the optimal choice of acceptance rate and scaling with dimension of any algorithm within the class of first order locally-balanced Markov processes (under mild regularity conditions). In particular in Section \ref{sec:universal} we show that the 57\% guideline acceptance rate for the Metropolis-adjusted Langevin algorithm also holds for the Barker proposal and several other methods, as does the $\mathcal{O}(n^{1/3})$ scaling with dimension as measured by expected squared jump distance. 
Despite having the same optimal acceptance rate and scaling with dimensionality, however, all such schemes have a different asymptotic efficiency, which we explicitly characterize, enabling for principled and generic optimization of the algorithmic design.
We first consider optimal design of the Barker proposal, in particular with respect to the noise distribution, which is chosen to be Gaussian in \cite{livingstone2019barker}. We find, both theoretically and empirically, that it is in fact beneficial to choose a bi-modal noise distribution for each coordinate, and offer some discussion and practical user guidelines in Section \ref{subsec:optimalbarker}. We then consider the case in which the noise distribution is fixed and an optimal balancing function is chosen in Section \ref{subsec:optimalgaussian}, and the general scenario in which both the noise distribution and balancing function are optimized over in Section \ref{subsec:optimalfirstorder}. Both cases yield surprising results, such as optimality being reached by having positive probability of keeping some coordinates fixed at each iteration.
We conduct numerical experiments to verify the theory in Section \ref{sec:simulations}, and provide a discussion in Section \ref{sec:discussion}.  Our theoretical results build on the recently introduced optimal scaling framework of \cite{vogrinc2021counterexamples,zanella2017dirichlet}. 
One powerful aspect of this approach is the ability to analyze fairly generic schemes without requiring overly case-specific calculations (e.g.\ those related to proposal distributions with Gaussian noise, linear drift, etc.), while still obtaining explicit expressions for the asymptotic performances that can directly be compared among algorithms. 
This allows characterization of the quantitative interplay between fine-scale properties of the target and proposal distributions (such as moments of the noise, aggressiveness of the balancing function and derivatives of the target) in the resulting asymptotic efficiency, thus enabling precise methodological guidance.

\section{Locally-balanced Markov processes}
\label{sec:lb}

\subsection{General framework}
\label{subsec:framework}

Consider a Markov transition kernel $Q$ defined on a Borel space $(\mathbb{X},\mathcal{F})$. We restrict attention to $\mathbb{X} \subset \mathbb{R}^n$ for some finite $n$. We say $Q$ satisfies the detailed balance equations with respect to a probability measure $\pi$ if 
\begin{equation} \label{eq:db}
\int f(x)h(y) \pi(dx)Q(y,dx) = \int f(x)h(y) \pi(dy)Q(y,dx)
\end{equation}
for any $f,h \in L^2(\pi)$. When $Q$ does not satisfy \eqref{eq:db}, a new kernel can be constructed using the concept of a balancing function. Let $g:[0,\infty) \to [0,\infty)$ be such that $g(0) = 0$ and for $t>0$
\begin{equation} \label{eq:balancingfunction}
    g(t) = tg(1/t),
\end{equation}
and note that by Proposition 1 in \cite{tierney1998note} there exists a symmetric set $\mathcal{R} \times \mathcal{R} \in \mathbb{X} \times \mathbb{X}$ such that the Radon--Nikodym derivative 
\begin{equation} \label{eq:RNderiv}
    t(x,y) = \frac{\pi(dy)Q(y,dx)}{\pi(dx)Q(x,dy)}
\end{equation}
is well-defined and such that $0<t(x,y)<\infty$ if $x,y \in \mathcal{R}$ and $t(x,y) = 0$ otherwise. Then the kernel
\begin{equation} \label{eq:balancedkernel}
    \tilde{\mathcal{P}}(x,dy) = g\left\{ \frac{\pi(dy)Q(y,dx)}{\pi(dx)Q(x,dy)} \right\}Q(x,dy)
\end{equation}
satisfies \eqref{eq:db}. However, the kernel $\tilde{\mathcal{P}}$ is not necessarily Markov. One way of enforcing that \eqref{eq:balancedkernel} integrates to one is to restrict attention to $g \leq 1$, ensuring that $\tilde{\mathcal{P}}(x,\mathbb{X}) \leq 1$, and then combine with $r(x,dy) = \{1 - \tilde{\mathcal{P}}(x,\mathbb{X})\}\delta_x(dy)$, where $\delta_x(A) = 1$ if $x \in A$ and $0$ otherwise. The resulting kernel $\tilde{\mathcal{P}}(x,dy) + r(x,dy)$ is Metropolis--Hastings (e.g. \cite{tierney1998note}).

An alternative strategy introduced in \cite{zanella2020informed,power2019accelerated,livingstone2019barker} is to instead allow any $g$ for which $\mathcal{Z}(x) = \tilde{\mathcal{P}}(x,\mathbb{X})$ is finite, and then set
\begin{equation} \label{eq:balancedkernel_norm}
    \mathcal{P}(x,dy) = \frac{\tilde{\mathcal{P}}(x,dy)}{\mathcal{Z}(x)}.
\end{equation}
Note that $\mathcal{P}$ does not satisfy \eqref{eq:db} in general, in fact $\mathcal{P}$ is invariant with respect to the measure $\mathcal{Z}(x)\pi(dx)$. A $\pi$-invariant Markov jump process can be constructed, however, by introducing a holding time $\mathcal{Z}(x)$ at each state $x$, and then choosing the next state according to $\mathcal{P}$. This construction is called a \emph{locally-balanced Markov process} (see \cite{power2019accelerated,hird2020fresh} for more detail).

\subsection{First order locally-balanced processes}

The function $\mathcal{Z}(x)$ will not be tractable in general, meaning further work is needed to design a sampling algorithm based on a locally-balanced Markov process. One approach is to restrict attention to symmetric $Q$, and $\pi$ absolutely continuous with respect to the Lebesgue measure on $\mathbb{R}^n$ with differentiable Lebesgue density $\pi(x)$. In this case \eqref{eq:RNderiv} reduces to $\pi(y)/\pi(x)$. From this point several natural first order approximations of this ratio can be taken to construct a new more tractable kernel.  It is argued in \cite{hird2020fresh,livingstone2019barker} that a good choice is the component-wise approximation found by letting $Q(x,dy) = \prod_i \sigma^{-1}\mu\{(dy_i - x_i)/\sigma\}$, where $\mu$ is a centred and symmetric distribution on $\mathbb{R}$ and $\sigma > 0$, and setting
$$
\tilde{P}(x,dy) = \prod_{i=1}^n g\left( e^{(y_i-x_i)\partial_i\log\pi(x)} \right) \mu\left(\frac{dy_i-x}{\sigma}\right),
$$
where $\partial_i = \partial/\partial x_i$ and for any event $A$ the set $(A-x_i)/\sigma = \{ z\in\R : x_i +\sigma z \in A\}$, and its Markovian counterpart
\begin{equation} \label{eq:1storderlb}
    P(x,dy) = \frac{\tilde{P}(x,dy)}{Z(x)}
\end{equation}
where $Z(x) = \tilde{P}(x,\mathbb{X})$.  With this approximation certain choices of $g$ and each $Q_i$ lead to familiar forms of $P$. Choosing $g(t) = \surd t$ and $\mu$ to be standard Gaussian, for example, leads to the unadjusted Langevin algorithm \citep{roberts1996exponential}.  The class of kernels obtained by \eqref{eq:1storderlb} is much broader, however, and is currently relatively unexplored.

\subsection{The choice of balancing function}

\cite{livingstone2019barker} suggest the choice of balancing function $g(t) = 2t/(1+t)$, as popularised by \cite{barker1965monte} in the context of Metropolis--Hastings.  With this choice a sample from $P$ can be easily drawn in the following manner.  First sample $z_i \sim \mu$ for each $i$, then set $\beta_{x,i} = \partial_i\log\pi(x)$ and flip the sign of each $z_i$ with probability $F(\beta_{x,i}z_i)$, where $F(x) = e^x/(1+e^x)$.  Finally add this to the current coordinate $x_i$.  To construct a $\pi$-invariant Markov chain a Metropolis--Hastings correction is then applied to this Barker proposal.  See Algorithm \ref{alg:barker} for more detail.


\begin{algorithm} 
\caption{Simulate from the Barker proposal.}\label{alg:barker}
\begin{tabbing}
    \qquad Require: current point $x \in \mathbb{X}$ \\
    \qquad \enspace For $i=1$ to $n$ \\
    \qquad \qquad Draw $z_i \sim \mu$, and set $\beta_{x,i} \leftarrow \partial_i \log\pi(x)$ \\
    \qquad \qquad Set $y_i \leftarrow x_i + z_i$ with probability $F(\beta_{x,i}z_i)$, and $y_i \leftarrow x_i - z_i$ otherwise\\
\qquad \enspace Output $y = (y_1,...,y_d)$
\end{tabbing}
\end{algorithm}

It is natural to wonder how many choices of $g$ can be made. Two other simple possibilities are $\min(1,t)$ and $\max(1,t)$, the latter being recently studied in \cite{choi2020metropolis}.  The below results show that in fact the family of balancing functions is infinitely large.

\begin{proposition} \label{prop:g_to_even}
Let $\mathcal{H} = \{ h : \mathbb{R} \to [0,\infty) ~ : ~ h(x) = h(-x) \}$ be the space of positive even functions. Then for every $h \in \mathcal{H}$, $g_h(t) = t^{1/2} h(\log t)$ is a balancing function. Conversely, for every balancing function $g$, the function $h_g(x) = e^{-x/2}g(e^x)$ is contained in $\mathcal{H}$.
\end{proposition}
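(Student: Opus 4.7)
The plan is a direct verification of the two implications, relying only on algebraic manipulation of the balancing identity $g(t) = tg(1/t)$ and the evenness condition $h(x)=h(-x)$.

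For the forward direction, I would fix $h \in \mathcal{H}$ and set $g_h(t) = t^{1/2} h(\log t)$ for $t>0$, extending by $g_h(0) := 0$ (the only sensible choice, since the formula is ill-defined at the origin and the balancing-function definition fixes the value there). Non-negativity is immediate from $t^{1/2} \geq 0$ and $h \geq 0$. The functional equation $g_h(t) = t g_h(1/t)$ reduces to the evenness of $h$: for $t>0$, $t g_h(1/t) = t \cdot t^{-1/2} h(\log(1/t)) = t^{1/2} h(-\log t) = t^{1/2} h(\log t) = g_h(t)$.

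For the converse, given a balancing function $g$, I would set $h_g(x) = e^{-x/2} g(e^x)$ and check positivity and evenness. Positivity is clear. For evenness, substitute $t = e^x$ so that $e^{-x} = 1/t$; then $h_g(-x) = e^{x/2} g(e^{-x}) = t^{1/2} g(1/t)$, and by the balancing identity $g(1/t) = g(t)/t$, so $h_g(-x) = t^{-1/2} g(t) = e^{-x/2} g(e^x) = h_g(x)$.

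There is no genuine obstacle here; the only subtle point is the behaviour at $t=0$, which is handled by adopting the convention $g_h(0) := 0$ (consistent with the paper's definition). If desired, one can round off the statement by observing that the two constructions are mutually inverse: $h_{g_h}(x) = e^{-x/2} \cdot e^{x/2} h(x) = h(x)$ and $g_{h_g}(t) = t^{1/2} \cdot t^{-1/2} g(t) = g(t)$ for $t>0$, giving an explicit bijection between $\mathcal{H}$ and the set of balancing functions and thereby justifying the claim that this family is infinite-dimensional.
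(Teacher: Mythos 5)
Your proof is correct and follows essentially the same route as the paper: a direct verification that $tg_h(1/t)=g_h(t)$ via the evenness of $h$, and that $h_g(-x)=h_g(x)$ via the balancing identity. The extra remarks about the convention $g_h(0):=0$ and the two maps being mutually inverse are sound but not needed for the statement.
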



The above provides an explicit parametrisation of $g_h$ in terms of a specific $h \in \mathcal{H}$. The function $t^{1/2}$ can also be replaced with any other balancing function to give a different bijection.  The goal, of course, is to find choices of $g$ for which tractable sampling algorithms can be designed.  In Section \ref{sec:optimal} we design new balancing functions of this nature for specific objectives.

\section{A universality result on the optimal acceptance rate and scaling with dimension}
\label{sec:universal}

\subsection{Preliminaries}

The concept of a log-Metropolis--Hastings random variable will be crucial for our analysis of optimal scaling. We recall some key results here, for more detail see Section~3 of \cite{vogrinc2021counterexamples}.

\begin{definition}\label{def:log-MH}
For a probability measure $\pi$, Markov kernel $Q$ on $(\mathbb{X},\mathcal{F})$ and $\mathcal{R}$ as in Section \ref{subsec:framework}, let $X \sim \pi$ and $Y \sim Q(X,\cdot)$. The associated log-Metropolis-Hastings-ratio random variable is
\[
\rho(X,Y)
 = 
\begin{cases}
~\log\left\{\frac{\pi(dy)Q(y,dx)}{\pi(dx)Q(x,dy)}\right\}   &\text{if } (X,Y)\in\mathcal{R}\times\mathcal{R}\,,\\
~0   &\text{otherwise.}
\end{cases}
\]
\end{definition}

Let $\pi:\R \to [0,\infty)$ be a probability density on $\R$ and for any fixed $\sigma >0$ let $Q_\sigma: \R \times \R \to [0,1]$ be a Markov kernel.  We introduce the product measure $\pi_n(dx) = \prod_{i=1}^n \pi(x_i)dx_i$ on $\R^n$ and the product kernel $\mathcal{Q}_n(x,dy) = \prod_{i=1}^n Q_{\sigma_n}(x_i,dy_i)$, where $(\sigma_n)_{n \in \mathbb{N}}$ is a sequence of positive real numbers. The associated log-Metropolis--Hastings random variable is
$$
\rho(X_n,Y_n) = \sum_{i=1}^n \rho_n(X_{n,i},Y_{n,i}),
$$
where $X_n = (X_{n,1},...,X_{n,n}) \sim \pi_n$, $Y_n \sim \mathcal{Q}_n(X_n,\cdot)$, and $\rho_n$ is the log-Metropolis--Hastings random variable associated with $f$ and $Q_{\sigma_n}$.  The following is established in \cite{vogrinc2021counterexamples}.  

\begin{theorem}\label{thm:log-MH}
Assume that there exists a positive sequence $(a_n)_{n \in \mathbb{N}}$ with $\lim_{n\to\infty}a_n = 0$ such that
$$
\lim_{n\to\infty}\E[\rho_n^21_{\rho_n<-a_n}]/\E[\rho_n^2] = 0,
$$ 
where $1$ denotes the indicator function.
If in addition $(\sigma_n)_{n \in \mathbb{N}}$ is chosen such that $\lim_{n\to\infty}n\E[\rho_n^2]=v^2$ for some constant $v>0$, then as $n\to\infty$
\begin{equation} \label{eq:clt}
\sum_{i=1}^n\rho_n(X_{n,i},Y_{n,i})
\Rightarrow
N\left(-\frac{1}{2}v^2,v^2\right)\,.
\end{equation}
\end{theorem}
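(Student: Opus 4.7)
The plan is to observe that $\sum_{i=1}^n \rho_n(X_{n,i},Y_{n,i})$ is a sum of $n$ independent and identically distributed copies of the scalar random variable $\rho_n(X_{n,1},Y_{n,1})$, so the statement reduces to a triangular array central limit theorem. Writing $m_n=\E[\rho_n]$ and $\sigma_n^2=\mathrm{Var}(\rho_n)$, it suffices to establish three facts: (i) $n m_n \to -v^2/2$, (ii) $n\sigma_n^2 \to v^2$, and (iii) the Lindeberg condition for the i.i.d. array. Once these are in place, the classical Lindeberg CLT delivers the stated weak convergence with mean $-v^2/2$ and variance $v^2$.

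The engine driving both the mean asymptotic and the tail control is the exchange identity
\begin{equation*}
\E[\,h(\rho_n)\,e^{\rho_n}\,]=\E[\,h(-\rho_n)\,]
\end{equation*}
valid for bounded measurable $h$, which follows from Definition~\ref{def:log-MH} by unpacking $e^{\rho_n}$ as the Radon--Nikodym derivative of $\pi(dy)Q_{\sigma_n}(y,dx)$ with respect to $\pi(dx)Q_{\sigma_n}(x,dy)$ on $\mathcal{R}\times\mathcal{R}$, swapping the dummy variables, and using $\rho_n(y,x)=-\rho_n(x,y)$. Applied to $h(u)=u$ it yields $\E[\rho_n e^{\rho_n}]=-\E[\rho_n]$, which on expanding $e^{\rho_n}=1+\rho_n+\tfrac12\rho_n^2+R_n$ rearranges to
\begin{equation*}
\E[\rho_n]=-\tfrac12\E[\rho_n^2]-\tfrac12\E[\rho_n R_n].
\end{equation*}
I would then argue that the remainder term satisfies $n\E[\rho_n R_n]\to 0$: on $\{|\rho_n|\le a_n\}$ the integrand is $O(a_n \rho_n^2)$ which contributes at most $a_n\cdot n\E[\rho_n^2]=o(1)$; on $\{\rho_n<-a_n\}$ the exchange identity yields the hypothesis directly, and on $\{\rho_n>a_n\}$ the identity $\E[\rho_n^k e^{\rho_n}\mathbf{1}_{\rho_n>a_n}]=\E[(-\rho_n)^k\mathbf{1}_{\rho_n<-a_n}]$ (applied with $k=1,2$) trades the positive tail for the negative one at the cost of an $e^{-a_n}$ factor, so these contributions are bounded by constants times $\E[\rho_n^2\mathbf{1}_{\rho_n<-a_n}]=o(\E[\rho_n^2])=o(n^{-1})$. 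This gives $nm_n\to -v^2/2$, and (ii) follows from (i) since $n\sigma_n^2=n\E[\rho_n^2]-n m_n^2\to v^2-0$.

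The same tail-transfer trick handles Lindeberg. For every fixed $\delta>0$ the hypothesis and $a_n\to 0$ give $\E[\rho_n^2\mathbf{1}_{\rho_n<-\delta}]/\E[\rho_n^2]\to 0$, and applying the exchange identity with $h(u)=u^2\mathbf{1}_{u>\delta}$ shows
\begin{equation*}
\E[\rho_n^2\mathbf{1}_{\rho_n>\delta}]\le e^{-\delta}\E[\rho_n^2\mathbf{1}_{\rho_n<-\delta}],
\end{equation*}
so the same ratio vanishes on the upper tail. Combining the two, $\E[\rho_n^2\mathbf{1}_{|\rho_n|>\delta}]/\E[\rho_n^2]\to 0$ for every $\delta>0$, which together with $m_n\to 0$ and $n\sigma_n^2\to v^2$ is exactly the Lindeberg condition for the centred i.i.d. array.

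The main obstacle I anticipate is the cubic-remainder bound $n\E[\rho_n R_n]=o(1)$ in the mean calculation: the asymmetric assumption only controls the lower tail of $\rho_n$, and pushing that control through to the upper tail (where $R_n$ itself grows like $e^{\rho_n}$) requires the exchange identity to be applied separately on each tail with the correct moment, while the bulk $\{|\rho_n|\le a_n\}$ relies crucially on $a_n\to 0$ to beat the $n\E[\rho_n^2]=O(1)$ scaling. Once this bookkeeping is done, Lindeberg's CLT closes the argument.
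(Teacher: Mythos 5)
The paper does not actually prove this statement itself: Theorem~\ref{thm:log-MH} is imported from \cite{vogrinc2021counterexamples} and the appendix simply points to the proof of Theorem~8 there, so your reconstruction has to stand on its own. Your architecture is the right one and matches the cited argument in spirit: reduce to a triangular-array Lindeberg CLT, obtain $n\E[\rho_n]\to-v^2/2$ from the exchange identity $\E[h(\rho_n)e^{\rho_n}]=\E[h(-\rho_n)]$, and verify Lindeberg by transferring the upper tail to the lower tail via $\E[\rho_n^2 1_{\rho_n>\delta}]\le e^{-\delta}\E[\rho_n^2 1_{\rho_n<-\delta}]$. The Lindeberg step, the variance step, and the bulk and lower-tail estimates for the mean are all correct.

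There is, however, a genuine gap in your upper-tail control of the mean remainder. First, a bookkeeping slip: the identity $\E[\rho_n]=-\tfrac12\E[\rho_n^2]-\tfrac12\E[\rho_n R_n]$ requires the first-order remainder $R_n=e^{\rho_n}-1-\rho_n$, not the second-order one you display; with your $R_n$ an extra term $-\tfrac14\E[\rho_n^3]$ appears and needs its own treatment. More seriously, on $\{\rho_n>a_n\}$ the $k=1$ application of the exchange identity gives $\E[\rho_n e^{\rho_n}1_{\rho_n>a_n}]=\E[|\rho_n|1_{\rho_n<-a_n}]$, which is a \emph{first} moment on the lower tail and is not bounded by a constant times $\E[\rho_n^2 1_{\rho_n<-a_n}]$: on that event one only has $|\rho_n|\le\rho_n^2/a_n$, so the bound degrades by a factor $a_n^{-1}\to\infty$, and $o(1/(na_n))$ is not $o(1/n)$. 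This is not a removable technicality --- a law with an atom at $+b_n$ of mass $q_n$ paired (as the exchange identity forces) with an atom at $-b_n$ of mass $e^{b_n}q_n$, with $b_n=q_n=n^{-1/2}$, has $\E[\rho_n^2 1_{\rho_n<-a_n}]=o(1/n)$ but $\E[|\rho_n|1_{\rho_n<-a_n}]\asymp 1/n$. The repair is to exploit cancellation rather than bounding the terms separately: write $\rho_n R_n 1_{\rho_n>a_n}=\rho_n(e^{\rho_n}-1)1_{\rho_n>a_n}-\rho_n^2 1_{\rho_n>a_n}$ and apply the exchange identity to $h(u)=u(1-e^{-u})1_{u>a_n}$, which yields $\E[\rho_n(e^{\rho_n}-1)1_{\rho_n>a_n}]=\E[|\rho_n|(1-e^{\rho_n})1_{\rho_n<-a_n}]\le\E[\rho_n^2 1_{\rho_n<-a_n}]$ because $1-e^{u}\le|u|$ for $u\le 0$; combined with $\E[\rho_n^2 1_{\rho_n>a_n}]\le e^{-a_n}\E[\rho_n^2 1_{\rho_n<-a_n}]$ this gives $\E[\rho_n R_n 1_{\rho_n>a_n}]=o(1/n)$ and closes the argument. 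With that repair your proof is complete.
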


\begin{remark}
The expectation $n\E[-\rho_n]$ denotes the Kullback--Leibler divergence between the forward and reverse Markov transition kernels, $\pi(dx)Q(x,dy)$ and $\pi(dy)Q(y,dx)$. In fact, it is further shown in \cite{vogrinc2021counterexamples} that under the above assumptions $\lim_{n\to\infty}\E[\rho_n]/\E[\rho_n^2] = -1/2$, meaning that both the mean and variance on the right-hand side of \eqref{eq:clt} can be interpreted in terms of the Kullback--Leibler divergence in the limiting case $n\to\infty$.
\end{remark}

Guaranteeing the first condition, i.e. understanding how fast $\E[\rho_n^2]$ decays, is key for identifying the optimal scaling of a Metropolis-Hastings algorithm.  The other condition is technical and related to the uniform integrability of $n\rho_n^2$ and to the conditions required in the  Lindeberg's version of Central Limit Theorem (see Theorem 4.15 of \cite{kallenberg1997foundations}). It suffices for example, to show that $\rho_n$ has higher moments that vanish faster than $1/n$.

\subsection{The asymptotic acceptance rate for locally-balanced proposals}

We will establish that the above central limit theorem holds for first order locally-balanced Metropolis--Hastings under Assumption \ref{ass1} below, and then consider optimal acceptance rates and dimension dependence in terms of the expected squared jump distance in each coordinate.  We restrict attention to the class of target distributions $\pi_n(x) = \prod_{i=1}^n \exp\{\phi(x_i)\}$, for some $\phi:\mathbb{R} \to \mathbb{R}$, and impose regularity conditions on $\phi$ below.  Let $b(x) = \log\{g(e^x)\}$, and without loss of generality set $g(1)=1$. 

\begin{ass}\label{ass1}
There exist constants $H\in(0,1)$, $\gamma>0$, $\beta\geq 0$, $\epsilon>0$ such that

\begin{enumerate}[(i)]
    \item $\phi \in \mathcal{C}^{3+H}(\R)$ and for $f=\phi''', \phi''\phi', \phi'^3, \phi''|\phi'|^{1+\beta}$ the integrability condition $\int_\R f(x)^{2+\epsilon}(1+|\phi'(x)|^\beta)\pi(x)dx<\infty$ as well as the mixed growth-H\"older condition
    \[
    \left|f(x+\delta)-f(x)\right|
    \leq
    K(x)\max(|\delta|^H,|\delta|^\gamma)
    \]
    are satisfied. Function $K$ is such that $\int_{\R}K(x)^2(1+|\phi'(x)|^\beta)\pi(x)dx<\infty$.
    \item $b \in \mathcal{C}^{3}(\R)$ and $b',b'',b'''$ are all bounded above.
    \item $\int z^2 \mu(dz) =1$, $\int |z|^\xi \mu(dz) <\infty$ for $\xi = \max(6+3\epsilon,2+2H,2+2\gamma)$, and for all $a\in\R$ and some positive $C_\mu>0$
    \[
    \int_{\R}e^{b(az)}|z|^\xi\mu(dz)
    \leq
    C_\mu(1+|a|^\beta)\int_{\R}e^{b(az)}\mu(dz)
    <
    \infty.
    \]
\end{enumerate}
\end{ass}

Part (i) of the above refers to the target distribution, part (ii) to the balancing function and part (iii) to the interplay between them.  Part (i) is straightforwardly satisfied for many statistical models of interest, for example likelihoods from exponential families and suitably smooth priors.  Part (ii) is satisfied by all cases explicitly studied in the paper, such as $g(t) = \surd t$ and $g(t) = 2t/(1+t)$.  Part (iii) highlights the need to control the growth of $g$ and $b$ using the tails of $\mu$. If $g$ is bounded, as in the Barker case, then any $\mu$ with a moment generating function is sufficient for it to be satisfied (and for many targets actually much weaker conditions are required). When $g(t) = \surd t$, which is not bounded above, then stronger conditions on the tails of $\mu$ are needed, such as Gaussian tails.

Part (i) is explicitly weaker than the typical assumptions made in the optimal scaling literature (e.g. \cite{roberts1998optimal}).  A form of part (i) as well as $\int z^6\mu(dz)<\infty$ and $g\in\mathcal{C}^3$ are crucial to the analysis. Part (iii) imposes uniform control (with respect to $x$) of measures $e^{b(\sigma_nz\phi'(x))}(Z_{\sigma_n}(x))^{-1}\mu(z)dz$ in terms of only the measure $\mu(z)dz$. This is required so that the normalising constants $\Zs$ and their second derivatives are well defined. It may be possible to significantly relax parts (ii) or (iii), especially in specific settings, at the expense of strengthening elsewhere.  The following Proposition identifies some simple cases in which part (iii) is satisfied.

\begin{proposition}
\label{prop:ass1sufficient}
Part (iii) of Assumption~\ref{ass1} is satisfied in the following cases.
\begin{enumerate}[(i)]
\item If $\mu$ has a density with compact support, for any $g$. 
\item If $g$ is bounded, non-decreasing and $\int |z|^\xi \mu(dz) <\infty$ for $\xi$ as in Assumption. \ref{ass1}
\item If $g$ satisfies part (ii) of Assumption~\ref{ass1} and there exists $\tilde C_\mu,\tilde\beta>0$ such that for all $a\in\R$
\[
\int_{\R}e^{az}|z|^\xi\mu(dz)
\leq \tilde C_\mu(1+|a|^{\tilde\beta})\int_{\R}e^{az}\mu(dz)
<
\infty.
\]


\item If $g$ satisfies part (ii) of Assumption~\ref{ass1}, $\mu$ has a density $\mu\in\mathcal{C}^{1}(\R)$ such that $\lim_{z\to\pm\infty}e^{az}\mu(z)=0$ for any $a\in\R$ and there exists constants $p>1$, $A,B>0$  for which
\[
|z|^{p}\mu(z)
\leq
A\mu(z)-Bz\mu'(z).
\]
\end{enumerate}
\end{proposition}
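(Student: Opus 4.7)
My plan is to handle each of the four cases separately, in each constructing matched upper and lower bounds on $\int e^{b(az)}|z|^\xi\mu(dz)$ and $\int e^{b(az)}\mu(dz)$ that yield the desired ratio bound $C_\mu(1+|a|^\beta)$. The first step in every case is to distill the consequences of Assumption~\ref{ass1}(ii) and of the balancing property $g(t) = tg(1/t)$: the latter yields $b(t) - b(-t) = t$, hence $b'(t) + b'(-t) = 1$, so the one-sided bound on $b'$ from Assumption~\ref{ass1}(ii) automatically upgrades to a two-sided bound $b'\in[\lambda_-, \lambda_+]$ with $\lambda_\pm = 1/2\pm M'$ and $M' := \sup|b' - 1/2| < \infty$. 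I would also introduce the decomposition $b(t) = b(0) + t/2 + \tilde b(t)$ in which $\tilde b$ is even, $C^3$, satisfies $\tilde b(0) = \tilde b'(0) = 0$, and $|\tilde b(t)| \le M'|t|$; this decomposition underlies the symmetrization arguments used below.

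Cases (i) and (ii) are essentially immediate. In case (i) compactness of $\mathrm{supp}(\mu)$ gives $|z|^\xi \le L^\xi$ pointwise, so the ratio is at most $L^\xi$ with $\beta = 0$. For case (ii), boundedness of $g$ by some $G$ yields $e^{b(az)} \le G$, hence a finite numerator; for the denominator, the balancing equation together with $g$ non-decreasing and $g(1)=1$ (WLOG) forces $g(t) \ge \min(1,t)$ (for $t \ge 1$ this is monotonicity, for $t \le 1$ it comes from $g(t) = tg(1/t) \ge t$). Thus $\int e^{b(az)}\mu(dz) \ge \int\min(1,e^{az})\mu(dz) \ge \mu(\{z : az\ge 0\}) \ge \tfrac{1}{2}$ by symmetry of $\mu$, and one takes $\beta = 0$.

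For case (iv) I would use a Stein-type integration-by-parts argument. Multiplying the density inequality $|z|^p\mu \le A\mu - Bz\mu'$ by the test function $\phi_q(z) := e^{b(az)}|z|^q$, integrating, and integrating by parts on the $-Bz\mu'$ term (boundary terms vanish by the decay assumption on $\mu$) yields a recursion
\[
I_{p+q} \;\le\; (A + B(q+1))\,I_q \;+\; B(\tfrac{1}{2}+M')|a|\,I_{q+1},
\]
where $I_k := \int e^{b(az)}|z|^k\mu(dz)$. Combining with the H\"older interpolation $I_{q+1} \le I_q^{1-1/p} I_{p+q}^{1/p}$ produces a closed inequality of the form $x \le C_1 + C_2|a|\,x^{1/p}$ in $x = I_{p+q}/I_q$, whose solution satisfies $x \lesssim 1 + |a|^{p/(p-1)}$. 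Iterating in $q$ (and interpolating via H\"older where $\xi$ is not an integer multiple of $p$) reaches the target moment $I_\xi/I_0 \lesssim 1 + |a|^{\xi/(p-1)}$.

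Case (iii) is where I expect the main technical obstacle. The natural starting point is to bound $e^{b(az)}$ pointwise by $e^{b(0)}(e^{\lambda_+ az} + e^{\lambda_- az})$ (from the mean value theorem and $b'\in[\lambda_-,\lambda_+]$), so that after multiplication by $|z|^\xi$ and integration, the hypothesised condition on $\mu$ gives the numerator bound $C'(1+|a|^{\tilde\beta})[M_\mu(\lambda_+ a) + M_\mu(\lambda_- a)]$ with $M_\mu(c) := \int e^{cz}\mu(dz)$. The difficulty is a matching lower bound on the denominator: naive bounds like $e^{\tilde b(az)} \ge e^{-M'|az|}$ lose an exponential factor in $|a|$ relative to the numerator, as explicit Gaussian computation shows. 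To close this gap I would start from the rewriting
\[
\int e^{b(az)}\mu(dz) \;=\; e^{b(0)}\int\cosh(az/2)\,e^{\tilde b(az)}\mu(dz),
\]
valid by symmetry of $\mu$ and evenness of $\tilde b$, and combine $\cosh(az/2) \ge 1$ with a refined analysis of $\int e^{\tilde b(az)}\mu(dz)$ that exploits the second- and third-order Taylor structure of $\tilde b$ together with log-convexity of $M_\mu(\cdot)$ and the hypothesised polynomial control. I anticipate needing to split into regimes depending on whether $\lambda_+ a$ or $\lambda_- a$ dominates the exponential-moment growth of $\mu$, and this sharpening of the denominator lower bound is the main technical hurdle.
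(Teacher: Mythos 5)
Your cases (i) and (ii) are correct and essentially identical to the paper's (the paper phrases (ii) as ``$b$ is non-decreasing with $b(0)=0$, hence $e^{b(az)}\geq 1$ on the half-line $\{az\geq 0\}$'', which is your $g(t)\geq\min(1,t)$ argument). Your case (iv) is correct but takes a genuinely different route: the paper runs the integration-by-parts recursion against the pure exponential $e^{az}$, reducing the power of $|z|$ by $p-1$ per step at the cost of a factor $(A+B(\lambda-p+1)+|a|B)$, thereby verifying the hypothesis of part (iii) with $\tilde\beta=\lceil\xi/(p-1)\rceil$ and then invoking (iii); you run the same recursion directly against the tilted measure $e^{b(az)}\mu(dz)$ and close it with the H\"older interpolation $I_{q+1}\leq I_q^{1-1/p}I_{p+q}^{1/p}$, which is self-contained and does not route through (iii). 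That is a legitimate alternative (you should still record that the $I_k$ are finite before forming ratios, and that the boundary terms vanish because $e^{b(az)}|z|^{q+1}\mu(z)\leq e^{\lambda_+|az|}|z|^{q+1}\mu(z)\to 0$ by applying the decay hypothesis at a slightly larger exponent). Your observation that $b(t)=t+b(-t)$ upgrades the one-sided bound on $b'$ to a two-sided one is correct and is indeed needed, though the paper does not spell it out.

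The genuine gap is case (iii): you have diagnosed the obstacle correctly but not closed it, and the route you sketch cannot be closed. Splitting the numerator via $e^{b(az)}\leq e^{\lambda_+az}+e^{\lambda_-az}$ forces you to lower-bound $\int e^{b(az)}\mu(dz)$ against $M_\mu(\lambda_+a)+M_\mu(\lambda_-a)$, but the only pointwise lower envelope available for a general $b$ with $b'\in[\lambda_-,\lambda_+]$ is $\min(e^{\lambda_+az},e^{\lambda_-az})$, which yields a denominator of order $M_\mu(\lambda_-a)$ while the numerator is of order $M_\mu(\lambda_+a)$; as your own Gaussian computation shows, this ratio is $e^{(\lambda_+^2-\lambda_-^2)a^2/2}$, so no cosh rewriting, log-convexity or regime splitting can recover a polynomial bound once the two extreme tilts have been decoupled (the approach only happens to work for the special family $g_\gamma$, where $e^{b}$ is exactly a sum of two exponentials). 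The paper never decouples them: it writes $b(az)=z\cdot a\int_0^1b'(saz)\,ds$, i.e.\ $e^{b(az)}=e^{c(z)z}$ with $|c(z)|\leq\|b'\|_\infty|a|$, and applies the hypothesised moment inequality with this effective coefficient, so the right-hand side comes out as $\tilde C_\mu\max(1,\|b'\|_\infty^{\tilde\beta})(1+|a|^{\tilde\beta})\int e^{b(az)}\mu(dz)$ --- the denominator of the target inequality appears automatically and no separate lower bound is ever required. (The paper's invocation of the hypothesis with a $z$-dependent coefficient is itself stated rather informally, but the structural point --- keep numerator and denominator under the same tilt $e^{b(az)}$ --- is exactly what your decomposition destroys.) To repair your argument you would need to adopt this device, or otherwise bound $\int e^{b(az)}|z|^\xi\mu(dz)$ directly against $\int e^{b(az)}\mu(dz)$ without passing through the two extreme exponential moments.
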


In specific examples we typically verify (i), (ii) or (iv) of Proposition~\ref{prop:ass1sufficient}. For instance, choices of the form $\mu(dz) \propto e^{-|z|^p}dz$ for $p\geq 1$ satisfy (iv). Note that a statement analogous to (ii) but for the function $b$ is not valid. Even if $g$ is bounded, $b$ is only bounded from infinity above, not below. In fact, since $b(x)=x+b(-x)$ holds, $b$ can never be bounded. 
These conditions are required to analyze Taylor series remainder terms for the normalising constant. 
It is apparent from Proposition~\ref{prop:ass1sufficient} that less conditions on $\mu$ must be assumed for the Barker proposal, for which $g$ is bounded, compared to the Langevin choice $g(t) = \surd t$.



\begin{theorem} \label{thm:CLT}
Under Assumption~\ref{ass1} it holds that $\lim_{n\to\infty}\sigma_n^{-6}\E[\rho_n^2]=\theta^2$ for some $\theta\in[0,\infty)$ 
In addition, if $\theta>0$ and $\sigma_n$ is chosen such that  $\lim_{n\to\infty}n^{1/6}\sigma_n=\ell$, then
\begin{equation}\label{eq:clt_first_order}
\sum_{i=1}^n\rho_n(X_{n,i},Y_{n,i})
\Rightarrow
N\left(-\frac{1}{2}\ell^6\theta^2,\ell^6\theta^2\right).
\end{equation}
Denoting $\mathfrak{g}=g''(1)$, $\mu_4=\int_\R z^4\mu(dz)$, $\mu_6=\int_\R z^6\mu(dz)$ and $A_\phi = \E_\pi[(\phi''')^2]$, $B_\phi = \E_\pi[(\phi'\phi'')^2]$, $C_\phi = \E_\pi[\phi'\phi''\phi''']$ the
constant $\theta^2$ takes the form
\begin{align} \label{eq:theta_general}
\theta^2 =~ &
     \mu_6  \left\{\frac{1}{144}A_\phi + \left(\frac{1}{4}+\mathfrak{g}\right)^2 B_\phi \right.
     -\left.\frac{1}{6}\left(\frac{1}{4}+\mathfrak{g}\right) C_\phi \right\}
     \\&+
     \mu_4\left\{\frac{1}{6}\left(\frac{1}{2}+\mathfrak{g}\right) C_\phi \right.
     -\left.2\left(\frac{1}{4}+\mathfrak{g}\right)\left(\frac{1}{2}+\mathfrak{g}\right) B_\phi \right\}
     +\left(\frac{1}{2}+\mathfrak{g}\right)^2 B_\phi. \nonumber
\end{align}
\end{theorem}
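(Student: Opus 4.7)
The plan is to establish $\lim_{n\to\infty}\sigma_n^{-6}\E[\rho_n^2]=\theta^2$ by a careful Taylor expansion of the one-coordinate log-MH ratio in powers of $\sigma_n$; the CLT \eqref{eq:clt_first_order} then follows directly from Theorem~\ref{thm:log-MH}, provided we also verify its technical hypothesis $\E[\rho_n^2\, 1_{\rho_n<-a_n}]/\E[\rho_n^2]\to 0$. I would obtain the latter as a byproduct of the expansion, via a higher-moment bound on $\sigma_n^{-3}\rho_n$.

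First I would rewrite $\rho_n$ in a form suited to expansion. The symmetry of $\mu$ cancels its density from the forward/reverse ratio, and the identity $b(-u)=b(u)-u$ --- a direct consequence of the balancing identity $b(x)=x+b(-x)$, itself equivalent to $g(t)=tg(1/t)$ --- gives
\begin{equation*}
\rho_n=\phi(Y)-\phi(X)-\delta\phi'(Y)+[b(\delta\phi'(Y))-b(\delta\phi'(X))]+\log Z_{\sigma_n}(X)-\log Z_{\sigma_n}(Y),
\end{equation*}
where $\delta=Y-X$. Under $Q_{\sigma_n}(X,\cdot)$ the rescaled increment $Z=\delta/\sigma_n$ has density $e^{b(\sigma_n z\phi'(X))}/Z_{\sigma_n}(X)$ with respect to $\mu(dz)$, a $\sigma_n$-perturbation of $\mu$ whose contribution to the leading order of $\E[\rho_n^2]$ is negligible because the dominant polynomial in $Z$ is even while the perturbation is odd of order $\sigma_n$.

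Differentiating the balancing identity yields $b'(0)=1/2$ and $b''(0)=\mathfrak{g}+1/4$. Using $\phi\in\mathcal{C}^{3+H}$ and $b\in\mathcal{C}^3$, third-order expansions give
\begin{equation*}
\phi(Y)-\phi(X)-\delta\phi'(Y)=-\tfrac{1}{2}\phi''(X)\delta^2-\tfrac{1}{3}\phi'''(X)\delta^3+\mathcal{O}(\delta^{3+H}),
\end{equation*}
\begin{equation*}
b(\delta\phi'(Y))-b(\delta\phi'(X))=\tfrac{1}{2}\phi''(X)\delta^2+\tfrac{1}{4}\phi'''(X)\delta^3+\bigl(\mathfrak{g}+\tfrac{1}{4}\bigr)\phi'(X)\phi''(X)\delta^3+\mathcal{O}(\delta^4).
\end{equation*}
The $\delta^2$ terms cancel exactly --- the structural cancellation responsible for the $\sigma_n^6$ (rather than $\sigma_n^4$) scaling. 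For the normalisers, Taylor-expanding $G(u)=g(e^u)$ and integrating against the symmetric $\mu$ gives $Z_{\sigma_n}(x)=1+\tfrac{\mathfrak{g}+1/2}{2}\sigma_n^2\phi'(x)^2+\mathcal{O}(\sigma_n^4)$, whose difference contributes $-\bigl(\mathfrak{g}+\tfrac{1}{2}\bigr)\sigma_n^2\phi'(X)\phi''(X)\delta+\mathcal{O}(\sigma_n^4)$. Writing $Z=\delta/\sigma_n$ and collecting terms,
\begin{equation*}
\rho_n=\sigma_n^3\Bigl\{\Bigl[-\tfrac{1}{12}\phi'''(X)+\bigl(\mathfrak{g}+\tfrac{1}{4}\bigr)\phi'(X)\phi''(X)\Bigr]Z^3-\bigl(\mathfrak{g}+\tfrac{1}{2}\bigr)\phi'(X)\phi''(X)\,Z\Bigr\}+R_n.
\end{equation*}
Squaring and integrating first against $\mu$ (odd moments vanish; $\int z^2\mu(dz)=1$, $\int z^4\mu(dz)=\mu_4$, $\int z^6\mu(dz)=\mu_6$) and then against $\pi$ (producing $A_\phi,B_\phi,C_\phi$) reproduces the announced formula for $\theta^2$ after elementary algebra.

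The main obstacle, and where Assumption~\ref{ass1} earns its keep, is controlling the remainder so that $\E[R_n^2]=o(\sigma_n^6)$ and the cross terms with the leading expression are also $o(\sigma_n^6)$, uniformly in $X\sim\pi$. The mixed growth-H\"older condition and the $\pi$-integrability bounds in part~(i) control the Taylor remainders of $\phi$ and its derivatives, while part~(ii) controls those of $b$. Part~(iii) is the critical tool for handling the $X$-dependent tilted law of $Z$: it ensures $\E_{Q_{\sigma_n}(X,\cdot)}[|Z|^k]\leq C_\mu(1+|\phi'(X)|^\beta)\E_\mu[|Z|^k]$ for $k\leq\xi$, so the extra $\phi'$-factors are absorbed by part~(i) upon taking $\E_\pi$. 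The same high-moment bound $\int|z|^\xi\mu(dz)<\infty$ with $\xi\geq 6+3\epsilon$ yields a $(2+\epsilon)$-th moment estimate on $\sigma_n^{-3}\rho_n$ that, via Markov's inequality applied at an appropriate $a_n$ (e.g.\ $a_n=\sigma_n^{3/2}$), delivers the technical hypothesis of Theorem~\ref{thm:log-MH}.
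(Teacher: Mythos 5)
Your proposal is correct and follows essentially the same route as the paper's proof: the same decomposition of $\rho_n$ into the $\phi$-difference, the $b$-difference and the log-normaliser difference, the same third-order expansions producing the cancellation of the $\delta^2$ terms and the leading polynomial $\bigl[-\tfrac{1}{12}\phi'''+(\mathfrak{g}+\tfrac14)\phi'\phi''\bigr]Z^3-(\mathfrak{g}+\tfrac12)\phi'\phi''Z$, the same assignment of Assumption~\ref{ass1}(i)--(iii) to the control of the Taylor remainders and of the tilted law of $Z$, and the same Markov/H\"older argument for the Lindeberg-type hypothesis of Theorem~\ref{thm:log-MH}. The only material left implicit relative to the paper is the explicit $L^2$ estimation of the individual remainder terms (the paper writes each as an exact integral form of a Taylor remainder and bounds seven such terms separately), which your sketch correctly locates but does not carry out.
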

Note that the specific choice of the scaling parameter $\sigma_n \propto n^{-1/6}$ in Theorem~\ref{thm:CLT} is the only rate leading to a non-trivial distributional limit for $\sum_{i=1}^n\rho_n(X_{n,i},Y_{n,i})$, despite the fact that $\lim_{n\to\infty}\sigma_n^{-6}\E[\rho_n^2]=\theta^2$ holds for any decay rate. Note also that the expression for $\theta^2$ depends on both the balancing function $g$ and the distribution $\mu$. In Section \ref{sec:optimal} we consider optimal ways to choose $g$ and $\mu$ for certain purposes.  We consider some example choices below.

\begin{example}
In the Langevin case $g(t)=\surd t$ and $\mu$ is standard Gaussian, so that $g''(1)=-1/4$ and $\mu_4=3$, $\mu_6=15$.  Then
\[
\theta^2
=
\frac{5}{48}A_\phi
+
\frac{1}{8}C_\phi
+
\frac{1}{16}B_\phi
\]
which if $\lim_{x\to\pm\infty}e^{\phi(x)}\phi'(x)\phi''(x)^2=0$ can also be written (using integration by parts)
\[
\theta^2
=
\frac{5}{48}\E\left[(\phi''')^2\right]
-
\frac{1}{16}\E\left[(\phi'')^3\right],
\]
a formula that appears in \cite{roberts1998optimal}.
\end{example}

\begin{example}
For the Barker proposal $g(t)=2t/(1+t)$ and $\mu$ can be any centred and symmetric distribution such that $\int z^6\mu(dz)<\infty$. With these choices $g''(1)=-1/2$ and
\begin{equation} \label{eq:theta_barker}
\theta^2
=
\frac{\mu_6}{144}
\left(A_\phi + 6C_\phi + 9B_\phi\right).
\end{equation}
\end{example}

An important consequence of Theorem \ref{thm:CLT}, and in particular of \eqref{eq:clt_first_order}, is a simple expression for the asymptotic acceptance rate for a first order locally-balanced Metropolis--Hastings algorithm (see e.g.\ Proposition 2.4 in \cite{roberts1997weak}).

\begin{corollary} \label{cor:arate}
Setting $\alpha_n(X,Y) = \min\{1, \sum_{i=1}^n\rho_n(X_i,Y_i)\}$, under the conditions of Theorem \ref{thm:CLT}
$$
\lim_{n\to\infty} \E [\alpha_n] = 2\Phi(-\ell^3\theta/2)
$$
where $\Phi$ is the standard Normal cumulative distribution function.
\end{corollary}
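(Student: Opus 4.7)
The corollary is a direct consequence of the central limit theorem established in Theorem~\ref{thm:CLT} combined with a standard Gaussian identity for the Metropolis--Hastings acceptance probability. I would proceed in three steps.

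First I would read the statement as $\alpha_n(X,Y)=\min\{1,\exp(\sum_{i=1}^n\rho_n(X_i,Y_i))\}$ (the displayed ``$\min\{1,\sum\}$'' being shorthand for applying the exponential inside the minimum, as is standard for the Metropolis--Hastings ratio when $\rho$ denotes its logarithm). Writing $R_n=\sum_{i=1}^n\rho_n(X_{n,i},Y_{n,i})$, Theorem~\ref{thm:CLT} asserts $R_n\Rightarrow R$ with $R\sim N(-\tfrac{1}{2}v^2,v^2)$ for $v=\ell^3\theta>0$. The function $r\mapsto\min(1,e^r)$ is bounded and continuous on $\R$, so by the portmanteau theorem
\[
\lim_{n\to\infty}\E[\alpha_n]
=
\lim_{n\to\infty}\E[\min(1,e^{R_n})]
=
\E[\min(1,e^R)].
\]

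Second, I would evaluate $\E[\min(1,e^R)]$ by the standard Gaussian computation. Split on the sign of $R$:
\[
\E[\min(1,e^R)]
=
\PP(R>0)+\E[e^R\mathbf{1}_{R\leq 0}].
\]
Writing $R=-v^2/2+vZ$ with $Z\sim N(0,1)$, the first term is $\PP(vZ>v^2/2)=\Phi(-v/2)$. For the second term, the identity $\E[e^{-v^2/2+vZ}f(Z)]=\E[f(Z+v)]$ (completing the square) gives
\[
\E[e^R\mathbf{1}_{R\leq 0}]
=
\E[\mathbf{1}_{-v^2/2+v(Z+v)\leq 0}]
=
\PP(Z\leq -v/2)
=
\Phi(-v/2).
\]
Summing the two contributions yields $\E[\min(1,e^R)]=2\Phi(-v/2)=2\Phi(-\ell^3\theta/2)$, which is the claimed limit.

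Third, I would remark that there is essentially no obstacle: the proof is a routine combination of weak convergence with a bounded continuous test function and an elementary Gaussian change of variable. The only mildly delicate point is the bracketed interpretation of $\alpha_n$ as $\min(1,e^{R_n})$ rather than $\min(1,R_n)$; once this is settled, the argument is entirely standard and mirrors the derivation in \cite{roberts1997weak} for the random walk Metropolis and in the analogous result for MALA. No additional regularity beyond Assumption~\ref{ass1} (already required to invoke Theorem~\ref{thm:CLT}) is needed.
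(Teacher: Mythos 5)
Your proof is correct and is exactly the standard argument the paper invokes by citing Proposition~2.4 of \cite{roberts1997weak}: weak convergence from Theorem~\ref{thm:CLT} applied to the bounded continuous function $r\mapsto\min(1,e^r)$, followed by the Gaussian identity $\E[\min(1,e^{R})]=2\Phi(-v/2)$ for $R\sim N(-v^2/2,v^2)$ with $v=\ell^3\theta$. Your reading of $\alpha_n$ as $\min\{1,e^{\sum_i\rho_n}\}$ is the intended one (cf.\ the paper's notation $(1\wedge e)(\sum_i\rho_{n,i})$ in the proof of Theorem~\ref{thm:ESJD}), so no gap remains.
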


\subsection{Optimal acceptance rates}

Given the simplified limiting expression for $\alpha_n$ in Corollary \ref{cor:arate},  we can consider optimal choices of the constant $\ell$ for a fixed $\theta$, leading to an optimal acceptance rate. We consider optimising the expected squared jump distance here, which is well-studied and has a strong justification motivated by diffusion limits in various settings \citep{roberts2001optimal}.

Using the same notation as above denote by $(\mathcal{E}^{g,\mu}_n)_{n \in \mathbb{N}}$ the sequence of expected squared jump distances for the first (or any other) coordinate, defined as
$$
\mathcal{E}^{g,\mu}_n
=
\E\left[(Y_{n,1}-X_{n,1})^2\alpha\left(X_n,Y_n\right)\right],
$$
where $X_n\sim \pi_n$ and $Y_n$ is generated from $X_n$ using a first order locally-balanced proposal, defined in \eqref{eq:1storderlb}, with distribution $\mu$, balancing function $g$ and variance parameter $\sigma_n$.
We have the following result.


\begin{theorem}\label{thm:ESJD}
Let Assumption~\ref{ass1} and Theorem~\ref{thm:CLT} be satisfied for $\phi$, $\mu$ and $g$ and $\theta>0$. Let $(\sigma_n)_{n \in \mathbb{N}}$ be a positive sequence with $\lim_{n\to\infty} \sigma_n = 0$. If either $\lim_{n\to\infty}n^{1/6}\sigma_n=0$ or $\lim_{n\to\infty}n^{1/6}\sigma_n=\infty$ then as $n \to \infty$
\[
n^{1/3}\mathcal{E}^{g,\mu}_n
\to
0.
\]
If $\lim_{n\to\infty} n^{1/6}\sigma_n = \ell$ for some $\ell\in(0,\infty)$, then as $n\to\infty$
\[
n^{1/3}\mathcal{E}^{g,\mu}_n
\to
h(\ell)
=
2\ell^2\Phi(-\ell^3\theta/2)\,,
\]
where $\Phi$ is the standard Normal cumulative distribution function on $\R$.
Furthermore, there exists a unique optimal $\ell^*(=\ell^*(g,\mu))$ that maximizes $h(\ell)$, for which $2\Phi\{-(\ell^*)^3\theta/2\} \approx 0.574$. The corresponding optimal asymptotic efficiency satisfies
\[
h(\ell^*)
=
C_h\theta^{-2/3}\,,
\]
where $C_h \approx 0.652$.
\end{theorem}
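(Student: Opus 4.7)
The product form of $\pi_n$ and the coordinate-wise factorization of the first order locally-balanced proposal in \eqref{eq:1storderlb} imply that $(X_{n,1},Y_{n,1})$ is independent of $(X_{n,i},Y_{n,i})_{i=2}^n$, and that both the increment $Y_{n,1}-X_{n,1}$ and the log-ratio $\rho^{(1)}_n:=\rho_n(X_{n,1},Y_{n,1})$ are functions only of this first pair. Writing $R_n=\sum_{i=1}^n\rho_n(X_{n,i},Y_{n,i})=\rho^{(1)}_n+W_n$, with $W_n=\sum_{i=2}^n\rho_n(X_{n,i},Y_{n,i})$ independent of the first pair, the plan is to justify the asymptotic factorization
\[
\mathcal{E}^{g,\mu}_n
=
\E\!\left[(Y_{n,1}-X_{n,1})^2\min(1,e^{R_n})\right]
\sim
\E\!\left[(Y_{n,1}-X_{n,1})^2\right]\cdot\E\!\left[\min(1,e^{W_n})\right].
\]

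First I would remove the first log-ratio from the acceptance term. Since $x\mapsto\min(1,e^x)$ is $1$-Lipschitz,
\[
\left|\min(1,e^{R_n})-\min(1,e^{W_n})\right|
\leq
|\rho^{(1)}_n|,
\]
and hence by Cauchy--Schwarz the error incurred is bounded by $\sqrt{\E[(Y_{n,1}-X_{n,1})^4]\,\E[(\rho^{(1)}_n)^2]}$. Arguing as in the proof of Theorem~\ref{thm:CLT} and using Assumption~\ref{ass1}(iii) to dominate the relevant integrands uniformly in $x$, one gets $\E[(Y_{n,1}-X_{n,1})^{2k}]=O(\sigma_n^{2k})$ for $k=1,2$ and $\E[(\rho^{(1)}_n)^2]=O(\sigma_n^6)$, so the error is $O(\sigma_n^5)=o(\sigma_n^2)=o(n^{-1/3})$. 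After this replacement, independence gives an exact product, and it remains to identify the two factors.

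For the second factor, $W_n$ is the sum of $n-1$ i.i.d.\ summands to which Theorem~\ref{thm:log-MH} applies, so $W_n\Rightarrow N(-v^2/2,v^2)$ with $v^2=\ell^6\theta^2$. Since $\min(1,e^{\,\cdot\,})$ is bounded and continuous, the Portmanteau theorem and the classical identity $\E[\min(1,e^W)]=2\Phi(-v/2)$ for $W\sim N(-v^2/2,v^2)$ (proved by a direct change of variable using the Gaussian moment generating function) yield $\E[\min(1,e^{W_n})]\to 2\Phi(-\ell^3\theta/2)$. For the first factor, a change of variable $z=(y-x)/\sigma_n$ in $P(x,dy)$ shows
\[
\E[(Y_{n,1}-X_{n,1})^2]
=
\sigma_n^2\int_{\R}\!\int_{\R}z^2\,\frac{g(e^{\sigma_n z\phi'(x)})}{Z_{\sigma_n}(x)}\mu(dz)\pi(x)dx;
\]
Assumption~\ref{ass1}(iii) and $g(1)=1$ legitimize dominated convergence and give $\sigma_n^{-2}\E[(Y_{n,1}-X_{n,1})^2]\to\int z^2\mu(dz)=1$. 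Combining everything and using $n^{1/3}\sigma_n^2\to\ell^2$ yields $n^{1/3}\mathcal{E}^{g,\mu}_n\to h(\ell)$. The degenerate cases are easy: when $n^{1/6}\sigma_n\to 0$ the bound $\alpha\leq1$ and $n^{1/3}\sigma_n^2\to0$ suffice, whereas when $n^{1/6}\sigma_n\to\infty$ a standard Gaussian tail bound $\Phi(-x)\leq e^{-x^2/2}$ dominates the $\ell_n^2$ factor coming from $n^{1/3}\E[(Y_{n,1}-X_{n,1})^2]$.

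Finally, to optimise $h(\ell)=2\ell^2\Phi(-\ell^3\theta/2)$, I would substitute $v=\ell^3\theta$ to obtain $h(\ell)=2\theta^{-2/3}v^{2/3}\Phi(-v/2)$, separating $\theta$. Differentiating, the first-order condition becomes $4\Phi(-v/2)=3v\phi(v/2)$, whose left side decreases from $2$ to $0$ and right side increases from $0$ and then decreases at a slower Gaussian rate; a short monotonicity argument for $v\mapsto\Phi(-v/2)/[v\phi(v/2)]$ gives a unique root $v^*$, hence a unique $\ell^*=(v^*/\theta)^{1/3}$. Numerical evaluation yields $v^*\approx1.12$, the optimal acceptance rate $2\Phi(-v^*/2)\approx0.574$, and $h(\ell^*)=2(v^*)^{2/3}\Phi(-v^*/2)\,\theta^{-2/3}=C_h\theta^{-2/3}$. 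The main subtlety throughout is the uniform-in-$x$ control of the normalising constants $Z_{\sigma_n}(x)$ and their moments via Assumption~\ref{ass1}(iii), which underpins both the factorization error bound and the identification of the $\sigma_n^2$ leading order of the squared jump.
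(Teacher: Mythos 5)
Your treatment of the central case $n^{1/6}\sigma_n\to\ell\in(0,\infty)$ and of the optimisation over $\ell$ is correct and structurally the same as the paper's; the one genuine difference there is how the remainder from decoupling the first coordinate is controlled. You bound $|\min(1,e^{R_n})-\min(1,e^{W_n})|\le|\rho_n^{(1)}|$ and apply Cauchy--Schwarz, paying a fourth moment of the jump and obtaining an $O(\sigma_n^5)=o(n^{-1/3})$ error; the paper uses the same Lipschitz bound but concludes via $L^2$ (hence in-probability) convergence of the acceptance difference together with uniform integrability of $n^{1/3}(Y_{n,1}-X_{n,1})^2$. Your route is more quantitative and equally valid, since the weighted fourth moment is controlled by Assumption~\ref{ass1}(iii) with $\xi\ge 6$. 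The $n^{1/6}\sigma_n\to0$ case and the stationarity analysis (your condition $4\Phi(-v/2)=3v\varphi(v/2)$ is the paper's $2/3=s\varphi(s)/\Phi(-s)$ after $s=v/2$, with the same uniqueness argument) also agree with the paper.

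The gap is in the regime $n^{1/6}\sigma_n\to\infty$, which you dismiss as easy but which is in fact the delicate case. Two things break. First, your additive decoupling error $O(\sigma_n^5)$ is $o(n^{-1/3})$ only when $n^{1/3}\sigma_n^5\to0$, i.e.\ $\sigma_n=o(n^{-1/15})$; for slowly decaying $\sigma_n$ (say $\sigma_n=n^{-1/20}$, which still satisfies $n^{1/6}\sigma_n\to\infty$) the error term swamps the quantity being bounded. The paper avoids this by proving the multiplicative bound
\[
\E\left[(Y_{n,1}-X_{n,1})^2\min\left(1,e^{R_n}\right)\right]\le 2\,\E\left[(Y_{n,1}-X_{n,1})^2\right]\E\left[\min\left(1,e^{W_n}\right)\right]
\]
via the de-exponentialisation identity (Lemma~\ref{lem:de-exponentialisation}), which costs only a factor $2$ uniformly in $\sigma_n$. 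Second, your bound on the acceptance factor via $\Phi(-x)\le e^{-x^2/2}$ presupposes that $\E[\min(1,e^{W_n})]\approx 2\Phi(-\ell_n^3\theta/2)$ with $\ell_n=n^{1/6}\sigma_n\to\infty$, but Theorem~\ref{thm:log-MH} only provides a distributional limit when $n\E[\rho_n^2]$ converges to a finite constant, so no Gaussian approximation is available when the variance diverges. The paper instead splits on the event $\mathcal{A}_n=\{\sum_i\rho_{n,i}\le\tfrac12\sum_{i\ge2}\E[\rho_{n,i}]\}$, bounding the acceptance by $e^{-\theta^2n\sigma_n^6/8}$ there and by Chebyshev (giving order $1/(n\sigma_n^6)$) on the complement; the polynomial piece then yields $\sigma_n^2/(n\sigma_n^6)=n^{-1/3}(n^{1/6}\sigma_n)^{-4}=o(n^{-1/3})$. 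Some such concentration argument is needed in place of your Gaussian tail bound.
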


The above shows that any first order locally-balanced Metropolis--Hastings algorithm will have the same asymptotic optimal acceptance rate of 0.57, and that algorithmic efficiency as measured by expected squared jump distance will scale as $\mathcal{O}(n^{-1/3})$ for $n\to\infty$. This includes both Barker and Langevin proposals as well as many other possibilities. Theorem~\ref{thm:ESJD} also suggests a route to both comparison and optimal design of first order locally-balanced Metropolis--Hastings algorithms, in the former case by comparing $\theta^2$ for different choices of $\mu$ and $g$, and in the latter by choosing $\mu$ and $g$ so that $\theta^2$ in Theorem~\ref{thm:CLT} is minimized. According to the same theorem, under Assumption~\ref{ass1} the constant $\theta^2$ will depend on $\phi$ through $A_\phi, B_\phi$ and $C_\phi$, on $\mu$ only through $\mu_4$ and $\mu_6$ and on $g$ only through $\mathfrak{g} = g''(1)$. We explore optimal design under different constraints in the next Section.

In the Langevin proposal case, the constant $h(\ell)$ was shown to correspond to the speed measure of an overdamped Langevin diffusion limit in \cite{roberts1998optimal}. We conjecture that the same is true for locally balanced proposals in general, but do not prove explicitly diffusion limit results in this paper. 
Proving diffusion limit results for general locally balanced proposals is a non-trivial open problem, as it would require a conditional version of the central limit theorem in \eqref{eq:clt_first_order} that is hard to obtain in such generality.

\begin{example} \label{example:gaussian}
Take the Gaussian target case $\phi(x) = -x^2/2$. Then $\phi'(x) = -x$, $\phi''(x) = -1$ and $\phi'''(x) = 0$, meaning $A_\phi = C_\phi = 0$ and $B_\phi = \E[x^2] = 1$. For Langevin proposals with $g(t) = \surd t$ and $\mu$ taken as Gaussian, the constant $\theta^2$ in \eqref{eq:theta_general} becomes $\theta^2_L = 1/16$, whereas for the Barker choice $g(t) = 2t/(1+t)$ and the same $\mu$ we have 
$\theta_B^2 = \mu_6/16$. The ratio of asymptotic expected squared jump distances is therefore $(\theta_B/\theta_L)^{2/3} = \mu_6^{1/3}$.  Here $\mu_6 = 15$ meaning that Langevin proposals are asymptotically $15^{1/3} \approx 2.47$ times more efficient than Barker proposals with Gaussian noise when optimally tuned. This is consistent with experiments in Section 5.2 of \cite{livingstone2019barker}. 
\end{example}




\begin{example}\label{ex:hyperbolic}
Consider hyperbolic targets, $\phi(x) = (\delta^2 + x^2)^{1/2}$, with $\delta^2 = 0.1$ as in \cite{livingstone2019barker}. Then $A_\phi \approx 12.99$, $B_\phi \approx 0.22$ and $C_\phi \approx 1.68$. The same calculations as above imply that Langevin proposals are 1.18 times more efficient than Barker proposals with Gaussian noise when optimally tuned, which is also consistent with Section 5.2 of \cite{livingstone2019barker}.
\end{example}

\section{Optimal choices among the class of locally-balanced algorithms}
\label{sec:optimal}

\subsection{Optimal choice of noise in the Barker algorithm}
\label{subsec:optimalbarker}

In this setting we fix $g(t) = 2t/(1+t)$ and minimize $\theta^2$ with respect to $\mu$, for a given but arbitrary choice of $\phi$.  In this case $\theta^2$ is given by \eqref{eq:theta_barker}, and the only influence of $\mu$ comes from the sixth moment $\mu_6$. The asymptotic expected squared jump distance can therefore be straightforwardly maximised by minimising the sixth moment of $\mu$ subject to the constraint that $\mu_2 = 1$. Note that by Jensen's inequality $\mu_6 \geq \mu_2^3 = 1$, and in fact the lower bound is uniquely attained by choosing $\mu$ to be a Rademacher distribution, such that if $W \sim \mu$ then $W = 1$ with probability $1/2$ and $W=-1$ otherwise.  We state this formally below.

\begin{proposition}
If $g(t) = 2t/(1+t)$ then $\theta^2$ is minimized when $W \sim \mu$ is chosen to take values $+1$ and $-1$ each with probability $1/2$.
\end{proposition}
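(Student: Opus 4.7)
The proof should be essentially a one-line application of Jensen's inequality, once one unpacks the dependence of $\theta^2$ on $\mu$. My plan is as follows.

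First, I will specialise the general formula \eqref{eq:theta_general} to the Barker case. Since $g(t) = 2t/(1+t)$ gives $\mathfrak{g} = g''(1) = -1/2$, the $\mu_4$-coefficient and the final $\mu$-independent term in \eqref{eq:theta_general} vanish (they are proportional to $(1/2+\mathfrak{g})$ or $(1/2+\mathfrak{g})^2$), leaving the expression \eqref{eq:theta_barker}, namely $\theta^2 = (\mu_6/144)(A_\phi + 6C_\phi + 9B_\phi)$. Thus $\mu$ enters $\theta^2$ only through $\mu_6$, and only as a positive multiplicative factor: the coefficient $A_\phi+6C_\phi+9B_\phi$ depends solely on $\phi$, and must be nonnegative since $\theta^2 \geq 0$ holds for every admissible $\mu$ (one could alternatively exhibit nonnegativity directly from the expression). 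So minimising $\theta^2$ over $\mu$ reduces to minimising $\mu_6$ subject to the constraints imposed in Assumption~\ref{ass1}(iii), in particular the normalisation $\mu_2 = \int z^2 \mu(dz) = 1$ and the symmetry of $\mu$.

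Second, I will apply Jensen's inequality to the strictly convex map $t \mapsto t^3$ on $[0,\infty)$ with the random variable $W^2$, where $W \sim \mu$. This yields
\[
\mu_6 = \E[(W^2)^3] \geq (\E[W^2])^3 = \mu_2^3 = 1,
\]
with equality if and only if $W^2$ is almost surely constant (by strict convexity). Combined with $\E[W^2] = 1$, equality forces $W^2 = 1$ almost surely, i.e.\ $W \in \{-1,+1\}$ a.s. The symmetry of $\mu$ then forces $\PP(W=1) = \PP(W=-1) = 1/2$, i.e.\ $\mu$ is the Rademacher distribution. Conversely, this $\mu$ manifestly attains $\mu_6 = 1$.

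Finally, a small sanity check: the Rademacher distribution has compact support, so by Proposition~\ref{prop:ass1sufficient}(i) it satisfies part (iii) of Assumption~\ref{ass1} for the (bounded) Barker balancing function, ensuring that Theorem~\ref{thm:CLT} applies and the minimiser lies within the admissible class. There is no real obstacle here; the only delicate point is confirming that the minimisation really does reduce to minimising $\mu_6$ alone (i.e.\ that no other moments of $\mu$ sneak in). This is immediate from formula \eqref{eq:theta_barker} in Theorem~\ref{thm:CLT}, which is the substantive input the proposition relies upon.
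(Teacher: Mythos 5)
Your proof is correct and follows essentially the same route as the paper: reduce to minimising $\mu_6$ via \eqref{eq:theta_barker} and apply Jensen's inequality $\mu_6 \geq \mu_2^3 = 1$, with the equality case plus symmetry forcing the Rademacher distribution. The only additions you make (verifying nonnegativity of the $\phi$-dependent coefficient, which indeed equals $\E_\pi[(\phi'''+3\phi'\phi'')^2]$, and checking admissibility via Proposition~\ref{prop:ass1sufficient}(i)) are sensible but not departures from the paper's argument.
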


We can compare the relative efficiency of Barker with Rademacher versus Gaussian noise using \eqref{eq:theta_barker} in a similar manner to Examples \ref{example:gaussian} and \ref{ex:hyperbolic}.
Doing this shows that for any $\phi$ the Rademacher version will be $\mu_6^{1/3} \approx 2.47$ times more efficient than the Gaussian version.  It is particularly convenient that the optimal choice of $\mu$ does not depend in any way on $\phi$ and therefore generic methodological guidance can be provided for the algorithm. 
The comparison with the Langevin proposal is instead target dependent, as exemplified 
below.

\begin{example}
When $\phi(x) = -x^2/2$ as in Example \ref{example:gaussian}, the Barker proposal with Rademacher noise will be exactly as efficient as the Langevin proposal. When $\phi(x) = (\delta^2 + x^2)^{1/2}$ with $\delta^2 = 0.1$ as in Example \ref{ex:hyperbolic} then the Rademacher proposal will be 2.08 times more efficient than the Langevin proposal.
\end{example}

We compare these theoretical results with empirical performances in Section \ref{sec:simulations}. 
The Rademacher version of the Barker proposal is not per se a practical sampling algorithm given that the resulting algorithm will not in general produce a $\pi$-irreducible Markov chain.
One simple alternative that we propose is therefore to set $\mu$ to be an evenly-weighted mixture of two Normal distributions centred at $\pm\sqrt{1-\sigma^2}$, each with variance $\sigma^2<1$. The resulting approach, termed bi-modal Barker, will satisfy $\mu_6=1+12\sigma^2+18\sigma^4-16\sigma^6$ and be $15^{1/3}\mu_6^{-1/3}$ times more efficient than the version with Gaussian noise. For small $\sigma$ this is close to optimal whilst also being practical. For instance, for the choice $\sigma^2=0.1^2$, which is the one we use in simulations below, bi-modal Barker is approximately $2.37$ times more efficient than the Gaussian version.

The result on the Radamacher optimality may seem surprising at first given given the lack of $\pi$-irreducibility. Similar results have, however, been uncovered previously, for example it is known that the optimum expected squared jump distance for the random walk Metropolis when the target distribution is spherically symmetric is found by choosing the proposal distribution to be uniform on a hyper-sphere of fixed radius from the current point \citep{neal2011optimal}. Given the product form of $\pi$ considered in this work, the Rademacher structure is therefore natural. For the random walk Metropolis, however, the benefits of choosing such an optimized proposal distribution vanish as the dimension increases \citep{neal2011optimal,yang2013searching}, whereas in the Barker case they do not.

An intuitive explanation for this may be that bi-modal Barker proposal choice makes the MCMC method less diffusive and puts more effort on moving at least a certain distance away. This is consistent with motivation for other kinds of development of MCMC methods, for instance Hamiltonian Monte Carlo and non-reversible Piece-wise deterministic Markov processes \citep{duane1987hybrid,fearnhead2018piecewise}.

\subsection{Optimising over the choice of balancing function for a fixed noise distribution}
\label{subsec:optimalgaussian}

In this section we switch attention to the optimal choice of $g$ for a fixed choice of $\mu$.  The expression \eqref{eq:theta_general} in this case becomes a simple quadratic in $\mathfrak{g}$, which can be straightforwardly solved to find an optimum choice for a given $\phi$, as given in \eqref{eq:gopt_fixedmu} below.

\begin{proposition}\label{prop:opt_g}
Given $\phi$ and a fixed noise distribution $\mu$ with finite fourth and sixth moments $\mu_4 < \mu_6 < \infty$, the optimum choice of $\mathfrak{g}$ is 
\begin{equation} \label{eq:gopt_fixedmu}
\mathfrak{g}^* = \frac{\mu_6 \left(C_\phi - 3B_\phi \right) + \mu_4\left( 9B_\phi - C_\phi \right) - 6B_\phi}{12B_\phi\left( \mu_6 - 2\mu_4 + 1\right)}.
\end{equation}
\end{proposition}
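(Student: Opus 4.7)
The plan is to view the expression \eqref{eq:theta_general} as a quadratic polynomial in the free parameter $\mathfrak{g}=g''(1)$, with $\phi$ and $\mu$ held fixed. First I would rewrite \eqref{eq:theta_general} in the form $\theta^2 = a\mathfrak{g}^2 + b\mathfrak{g} + c$ by expanding the products $(1/4+\mathfrak{g})^2$, $(1/4+\mathfrak{g})(1/2+\mathfrak{g})$ and $(1/2+\mathfrak{g})^2$ and collecting by powers of $\mathfrak{g}$. A direct bookkeeping gives
\[
a = B_\phi\bigl(\mu_6 - 2\mu_4 + 1\bigr),
\]
while $b$ is a linear combination of $B_\phi$ and $C_\phi$; the $A_\phi$ term contributes only to the constant $c$ and therefore plays no role in the optimisation.

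The next step is to verify that $a>0$, so that $\theta^2$ is strictly convex in $\mathfrak{g}$ and admits a unique minimiser. Using the normalisation $\mu_2 = \int z^2\mu(dz) = 1$ from Assumption~\ref{ass1}(iii), one can rewrite
\[
\mu_6 - 2\mu_4 + 1 = \int(z^6 - 2z^4 + z^2)\mu(dz) = \int z^2(z^2-1)^2\mu(dz) \geq 0,
\]
with equality only if $\mu$ is supported on $\{-1,0,1\}$. Combined with the symmetry of $\mu$ and $\mu_2 = 1$, equality would force $\mu$ to be Rademacher and hence $\mu_4 = \mu_6 = 1$; the hypothesis $\mu_4<\mu_6$ excludes this, so $a>0$ strictly (assuming $B_\phi>0$; otherwise $\theta^2$ is independent of $\mathfrak{g}$ and any value is optimal).

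Given strict convexity, the unique minimiser is determined by the first-order condition $\partial_{\mathfrak{g}}\theta^2=0$, which is linear in $\mathfrak{g}$. I would differentiate the quadratic, group the contributions from $B_\phi$ and $C_\phi$, and solve to obtain $\mathfrak{g}^* = -b/(2a)$; after multiplying numerator and denominator by $6$ this matches precisely \eqref{eq:gopt_fixedmu}. The argument is thus mechanical once the quadratic-in-$\mathfrak{g}$ structure of $\theta^2$ is recognised; the main obstacle is careful bookkeeping of coefficients, and the only substantive ingredient is the sum-of-squares identity above showing strict positivity of $\mu_6-2\mu_4+1$ under the hypothesis $\mu_4<\mu_6$.
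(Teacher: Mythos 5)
Your proposal is correct and follows essentially the same route as the paper, which simply observes that \eqref{eq:theta_general} is a quadratic in $\mathfrak{g}$ with leading coefficient $B_\phi(\mu_6-2\mu_4+1)$ and solves the first-order condition. Your additional verification that $\mu_6-2\mu_4+1=\int z^2(z^2-1)^2\mu(dz)>0$ under the hypothesis $\mu_4<\mu_6$ (ruling out the degenerate Rademacher case) is a correct and worthwhile piece of bookkeeping that the paper leaves implicit.
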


Any family of balancing functions for which $\mathfrak{g} = g''(1)$ can be modified to take a desired value could therefore in principle be used to create an optimized algorithm for a particular $\mu$ and $\phi$. 
Consider the family
\begin{equation}\label{eq:g_family}
g_\gamma(t) = \frac{1}{2}\left(t^{\frac{1}{2}+\gamma}+t^{\frac{1}{2}-\gamma}\right),
\end{equation}
indexed by $\gamma \geq 0$, where for $\gamma=0$ we recover the Langevin case $g(t) = \surd t$. 
Any choice within the family is a balancing function, and is such that $g_\gamma(1)=1$ and $\mathfrak{g}=g_\gamma''(1)=\gamma^2-\frac{1}{4}$.  For a given $\phi$, the choice of $\gamma$ can therefore be adjusted to achieve the optimum asymptotic efficiency provided that $\mathfrak{g}^*$ in \eqref{eq:gopt_fixedmu} is larger than $-1/4$.

Given the results of the previous section it would seem natural to set $\mu$ as a Rademacher distribution, however in this case it turns out that all choices of $g$ give equivalent algorithms.
This follows straightforwardly from the fact that \eqref{eq:balancingfunction} implies $g(t)/\{g(t) + g(t^{-1})\} = 1/(1+t^{-1})$, which is independent of $g$. In fact Proposition \ref{prop:opt_g} does not apply 
to the Rademacher case 
since $\mu_4=\mu_6$. 
Another natural option is to fix $\mu$ to be standard Gaussian.
In this case 
\eqref{eq:gopt_fixedmu} implies that the maximum efficiency is found by choosing
$\mathfrak{g}=C_\phi/(10B_\phi)-1/5$. 
This scheme can be implemented using the family in \eqref{eq:g_family}, and sampling from the resulting first order locally-balanced proposal 
is straightforward as it consists in a mixture of two Gaussians, see the supplement for details. 
We do not implement this scheme in the simulations, however, in favour of the more efficient alternatives discussed in the next section. 

\subsection{Optimising over the choice of both noise distribution and balancing function}
\label{subsec:optimalfirstorder}

In this section we consider optimizing over both $g$ and $\mu$ jointly. 
The following proposition identifies the best possibly achievable asymptotic efficiency with first order locally-balanced proposals for a given target.

\begin{proposition}
A non-negative lower bound for $\theta^2$ that is independent of both $\mu$ and $g$ is
\begin{equation} \label{eq:ESJDglobalmin}
\theta^2 \geq \frac{1}{144}\left( A_\phi - \frac{C_\phi^2}{B_\phi} \right).
\end{equation}
Furthermore, $\theta^2$ can be made arbitrarily close to the lower bound by choosing $\mu_4>1$ sufficiently close to one, setting $\mu_6 = \mu_4^2$ and choosing
\begin{equation}\label{eq:g_opt}
\mathfrak{g}= \frac{\mu_4(C_\phi-3B_\phi)  + 6B_\phi}{12B_\phi(\mu_4-1)}.
\end{equation} 
\end{proposition}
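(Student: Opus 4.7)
The plan is to rewrite $\theta^2$ from \eqref{eq:theta_general} as a sum of three non-negative terms, one of which already matches the claimed bound up to a factor of $\mu_6$. Introducing the shorthand $v=\tfrac14+\mathfrak{g}$, $u=\tfrac12+\mathfrak{g}$ and the auxiliary variable $w=v-\mu_4 u/\mu_6$, the key identity I would verify is
\begin{equation}\label{eq:decomp}
\theta^2 \;=\; \frac{\mu_6}{144}\!\left(A_\phi-\frac{C_\phi^2}{B_\phi}\right) + B_\phi\,\mu_6\!\left(w-\frac{C_\phi}{12B_\phi}\right)^{\!2} + B_\phi\, u^2\,\frac{\mu_6-\mu_4^2}{\mu_6},
\end{equation}
obtained by completing the square first in $v$ (the $B_\phi$ contribution to $\theta^2$ is a quadratic form in $(v,u)$ with matrix $\bigl(\begin{smallmatrix}\mu_6 & -\mu_4\\ -\mu_4 & 1\end{smallmatrix}\bigr)$, whose determinant $\mu_6-\mu_4^2\geq 0$ suggests exactly the substitution $w$) and then in $w$. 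Two applications of the Cauchy--Schwarz inequality now make each term non-negative: $C_\phi^2=\E_\pi[\phi'''\cdot \phi'\phi'']^2\leq A_\phi B_\phi$ controls the first factor, and $\mu_4^2=\E[Z\cdot Z^3]^2\leq \mu_2\mu_6=\mu_6$ controls the last factor; Jensen's inequality moreover gives $\mu_6\geq \mu_2^3 =1$. Combining these yields
\[
\theta^2 \;\geq\; \frac{\mu_6}{144}\!\left(A_\phi-\frac{C_\phi^2}{B_\phi}\right) \;\geq\; \frac{1}{144}\!\left(A_\phi-\frac{C_\phi^2}{B_\phi}\right)\;\geq\;0,
\]
which is \eqref{eq:ESJDglobalmin} together with the asserted non-negativity.

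For the near-achievability, I would exhibit a one-parameter family of noise distributions that kill the last two summands in \eqref{eq:decomp} while driving $\mu_6\to 1$. The symmetric three-point law $\mu=p\,\delta_0+\tfrac{1-p}{2}(\delta_b+\delta_{-b})$ with $b=(1-p)^{-1/2}$ satisfies $\mu_2=1$, $\mu_4=(1-p)^{-1}>1$ and $\mu_6=(1-p)^{-2}=\mu_4^2$ (this is the only way to achieve equality in the Cauchy--Schwarz bound $\mu_4^2\leq \mu_6$ with $\mu_4>1$, since it requires $Z(Z^2-b^2)=0$ a.s.), so the third summand in \eqref{eq:decomp} vanishes identically. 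Imposing $w=C_\phi/(12B_\phi)$ and substituting $\mu_6=\mu_4^2$ reduces to a linear equation for $\mathfrak{g}$ whose unique solution is precisely formula \eqref{eq:g_opt}, eliminating the second summand. Letting $p\downarrow 0$ then drives $\mu_6\to 1$, and $\theta^2$ approaches the lower bound arbitrarily closely.

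The main obstacle is locating the change of variables that makes \eqref{eq:decomp} visible; once it is in hand, both halves of the proposition follow with only mechanical algebra and two standard Cauchy--Schwarz applications. A secondary technical point is that $\mathfrak{g}$ defined by \eqref{eq:g_opt} diverges as $\mu_4\downarrow 1$, so only near-optimality (not exact equality) is obtained; one should also confirm that for each admissible $\mu_4>1$ a balancing function $g$ with $g''(1)=\mathfrak{g}$ compatible with Assumption~\ref{ass1}(ii) exists, for which the parametric family \eqref{eq:g_family} suffices whenever $\mathfrak{g}>-1/4$. If the framework in use requires $\mu$ absolutely continuous, the discrete construction above can be replaced by its convolution with a narrow compactly-supported mollifier, whose moments approximate those of $\mu$ to any desired accuracy and which satisfies Assumption~\ref{ass1}(iii) via Proposition~\ref{prop:ass1sufficient}(i).
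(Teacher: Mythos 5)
Your argument is correct, and at bottom it is the paper's own strategy: reduce to a constrained quadratic in $(\mathfrak{g},\mu_4,\mu_6)$ under $1\leq\mu_4\leq\surd\mu_6$ and complete squares. The difference is in how the squares are organised. The paper sets $m_1=(\mathfrak{g}+\tfrac14)\phi'\phi''-\phi'''/12$, $m_2=-(\mathfrak{g}+\tfrac12)\phi'\phi''$, writes $\theta^2=(\mu_6-\mu_4^2)\E[m_1^2]+\E[(\mu_4m_1+m_2)^2]$, and then proceeds through a chain of three inequalities (drop the first term, drop a square, use $\mu_4\geq1$) whose equality conditions have to be tracked separately. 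You instead push the completion of the square to an exact identity, writing $\theta^2$ as $\tfrac{\mu_6}{144}(A_\phi-C_\phi^2/B_\phi)$ plus two manifestly non-negative remainders; I verified this identity against \eqref{eq:theta_general} and it is correct, and it has the advantage that the lower bound and the (near-)equality conditions are read off at a glance, at the cost of invoking $\mu_6\geq1$ rather than $\mu_4\geq1$ in the final step. The achievability half is identical in substance: your three-point law $p\delta_0+\tfrac{1-p}{2}(\delta_b+\delta_{-b})$ with $b=(1-p)^{-1/2}$ is exactly the paper's $\nu(a)$ with $a=(1-p)^{-1}$, and solving $w=C_\phi/(12B_\phi)$ under $\mu_6=\mu_4^2$ does reproduce \eqref{eq:g_opt}. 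Your closing technical remarks (divergence of $\mathfrak{g}$ as $\mu_4\downarrow1$, hence only near-optimality; realisability of a prescribed $\mathfrak{g}$; mollification of the discrete noise via Proposition~\ref{prop:ass1sufficient}(i)) go slightly beyond what the paper records in this proof and are sound; the one caveat, which you correctly flag yourself, is that the family \eqref{eq:g_family} only realises $\mathfrak{g}\geq-1/4$, so for targets with $C_\phi+3B_\phi<0$ one would need a different family of balancing functions, though Assumption~\ref{ass1}(ii) imposes no lower bound on $g''(1)$ so this is not an obstruction in principle.
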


\begin{proof}
Given $A_\phi>0, B_\phi>0$ and $C_\phi\in \R$ we must solve the constrained quadratic optimisation problem of minimising $\theta^2$ subject to $1 \leq \mu_4 \leq \surd \mu_6$.  The constraints on $\mu_4$ and $\mu_6$ are necessary because $1= \mu_2 \leq \surd \mu_4$ by Jensen's inequality and $\mu_4\leq \surd (\mu_2\mu_6) = \surd \mu_6$ by Cauchy's inequality. Moreover the Hamburger moment problem tells us these constraints are sufficient: if they are fulfilled then there exists a symmetric proposal distribution on $\R$ that satisfies them.

Defining the new variables $m_1 = (\mathfrak{g}+1/4)\phi'\phi'' - \phi'''/12$ and $m_2 = -(\mathfrak{g}+1/2) \phi'\phi''$, we can rewrite 
$\theta^2$ as
\begin{equation}\label{eq:first_ineq}
        \theta^2 
    = 
    \left(\mu_6-\mu_4^2\right)\E[m_1^2] + \E\left[\left(\mu_4m_1+m_2\right)^2\right] 
    \geq 
    \E\left[\left(\mu_4m_1 + m_2\right)^2\right].
\end{equation}
where the inequality follows from $\surd \mu_6 \geq \mu_4$.
Expressing this lower bound in terms of $A_\phi,B_\phi$ and $C_\phi$ gives
$$
\theta^2 \geq B_\phi\left\{\left(\mu_4\mathfrak{g}-\mathfrak{g} + \frac{\mu_4}{4}-\frac12\right)-\frac{\mu_4}{12}\frac{C_\phi}{B_\phi}\right\}^2
 + \frac{\mu_4^2}{144}\left(A_\phi-\frac{C_\phi^2}{B_\phi}\right),
$$
which can itself be lower bounded, giving
$$
\theta^2 \geq \frac{\mu_4^2}{144}\left(A_\phi-\frac{C_\phi^2}{B_\phi}\right)
\geq 
\frac{1}{144}\left(A_\phi-\frac{C_\phi^2}{B_\phi}\right).
$$
We have used three inequalities. The first, in \eqref{eq:first_ineq}, 
is realised if and only if $\mu_6=\mu_4^2$; the second simply bounds a square below by zero and is realised if and only if $\mathfrak{g}$ is defined as in \eqref{eq:g_opt}, which requires $\mu_4>1$; the third relies on $\mu_4\geq 1$ and is realised if and only if $\mu_4 = 1$. Note that the last two equalities cannot be realised simultaneously. 
The final lower bound is always non-negative due to $B_\phi A_\phi\geq C_\phi^2$ by Cauchy's inequality.
\end{proof}

Denote by $\nu(a)$ for $a>1$ a discrete symmetric distribution taking three possible values $-\sqrt{a},0,\sqrt{a}$, such that the probability of a non-zero value is $1/a$, and note that this is the unique symmetric distribution $\mu$ with moments satisfying $\mu_2=1$, $\mu_4=a$ and $\mu_6=a^2$.
Letting $\mathfrak{g}$ be defined by \eqref{eq:g_opt}, choosing $\mu=\nu(\mu_4)$ and taking $\mu_4$ arbitrarily close to one results in $\theta^2$ becoming arbitrarily close to the lower bound \eqref{eq:ESJDglobalmin}. This three point proposal results in an algorithm that achieves close to optimal asymptotic expected squared jump distance among the class of first order locally-balanced samplers provided that $\mathfrak{g}$ is chosen according to \eqref{eq:g_opt}.

\begin{remark}
This three point proposal is in fact also the optimal choice of $\mu$ for any fixed choice of $g$, but the amount of mass given to point zero  will vary depending on $g$. In the Barker case, for example, this point achieves no mass, resulting in the Rademacher choice for $\mu$.
\end{remark}

It is natural to consider taking the limit $\mu_4 \to 1$ and expect optimality to be reached there. When the dimension $n$ is fixed and finite, however, this results in a Rademacher proposal, which is suboptimal.  This can be seen by noting that the lower bound \eqref{eq:ESJDglobalmin} is always smaller than $B_\phi/16+A_\phi/144+C_\phi/24$, the value attained by the Rademacher proposal, because
$$
\frac{B_\phi}{16}+\frac{A_\phi}{144}+\frac{C_\phi}{24}
=
\frac{A_\phi}{144}+\left(\frac{B_\phi^{1/2}}{4}+\frac{C_\phi}{12B_\phi^{1/2}}\right)^2-\frac{C_\phi^2}{144B_\phi}
\geq \frac{1}{144}\left(A_\phi-\frac{C_\phi^2}{B_\phi}\right).
$$
Inspecting the proof of Theorem~\ref{thm:CLT} shows that $\mathfrak{g}$ must be increased sufficiently slowly as a function of $n$ to control the remainder terms in order for the asymptotic expression for $\theta^2$ to be a valid representation of the expected squared jump distance. In other words, as $\mu_4\to 1$ it takes increasingly large $n$ for the asymptotic regime to be representative of the finite $n$ setting. For a finite $n$, it is therefore necessary to choose $\mu_4>1$. We explore this phenomenon further in the supplement.
In all simulations below, we set $\mu_4=2$ unless stated otherwise. 

A surprising consequence of these findings is that the three point proposal with some mass at zero outperforms a Rademacher choice that is optimum for the Barker proposal when the freedom to choose $\mathfrak{g}$ is given. In terms of sampling, this suggests that efficiency gains can be made by allowing some components of the state to remain unchanged at each iteration of the algorithm with a probability that depends on the size of the gradient in that direction. The same family of balancing functions introduced in \eqref{eq:g_family} can again be used to create this optimum sampler.

A particular case of interest is the Gaussian setting $\phi(x) = -x^2/2$, in which case 
$\phi'''(0)$ and therefore $A_\phi=C_\phi=0$.
This means that by choosing any $\mu_4>1$ and $\mathfrak{g}$ according to \eqref{eq:g_opt} we can achieve zero asymptotic $\theta^2$.
 The result of this is a super-efficient sampler whose efficiency will effectively decay at a slower rate than $n^{-1/3}$. We illustrate this surprising finding numerically in Section \ref{sec:simulations}, but also stress that this property only holds when $\phi(x) = -x^2/2$ to the best of our knowledge.

\section{Simulation Study}
\label{sec:simulations}

\subsection{Efficiency with dimension on product targets}\label{sec:scaling_product}

We examine the expected squared jump distance of the first component of two different product form target distributions as a function of dimension. This setting is directly captured by the theoretical results of Sections \ref{sec:universal} and \ref{sec:optimal}.  The two target distributions considered are the multi-dimensional standard Gaussian distribution and the hyperbolic distribution of Example \ref{ex:hyperbolic}.  In each case we compare the random walk Metropolis, the Metropolis-adjusted Langevin algorithm, Barker with Gaussian noise, Barker with Rademacher noise, Barker with bi-modal noise as described in Section \ref{subsec:optimalbarker} and the optimal choice over both balancing function and noise distribution described in Section \ref{subsec:optimalfirstorder}, which will hereafter be called the three point proposal.

The results for the Gaussian target distribution are shown in Figure \ref{fig:esjd_prod}(a). It is clear from the plots that among the Barker algorithms the Rademacher and bi-modal choices are comparable and perform similarly to MALA, whereas the Barker algorithm with Gaussian noise has a lower expected squared jump distance by a factor of 2-2.5, in accordance with the theoretical value of 2.47.  The three points proposal performs best and appears to exhibit a slightly slower than $n^{-1/3}$ decay in expected squared jumping distance when the dimension in large.  This is because in the special case of Gaussian target $\theta^2$ from \eqref{eq:theta_general} equals 
zero when the choices described in Section \ref{subsec:optimalfirstorder} are made.

For the hyperbolic target results are shown in Figure \ref{fig:esjd_prod}(b).  The main difference compared to the Gaussian example is that now the Barker algorithms with Rademacher and bi-modal noise both outperform the Langevin algorithm, as predicted by the theory described in Section \ref{subsec:optimalbarker}.  The three points proposal is still the best performing algorithm.

\begin{figure}[h!]
\centering
\includegraphics[width=0.49\linewidth]{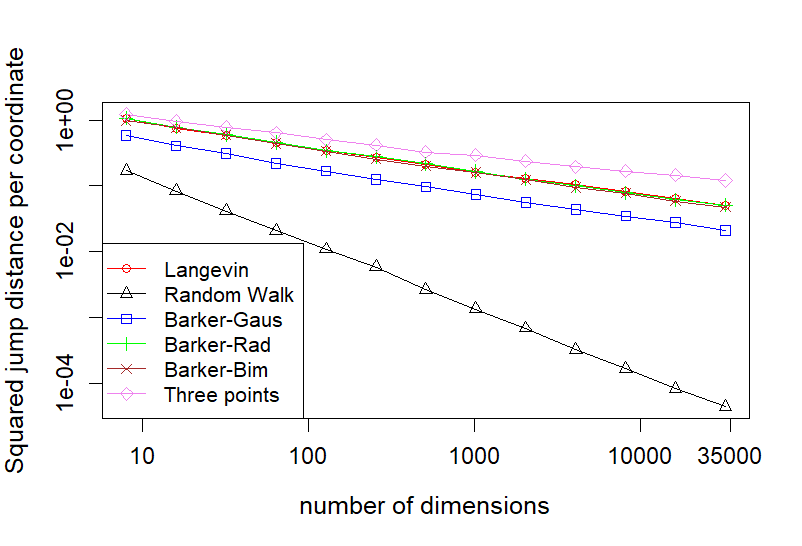}
\includegraphics[width=0.49\linewidth]{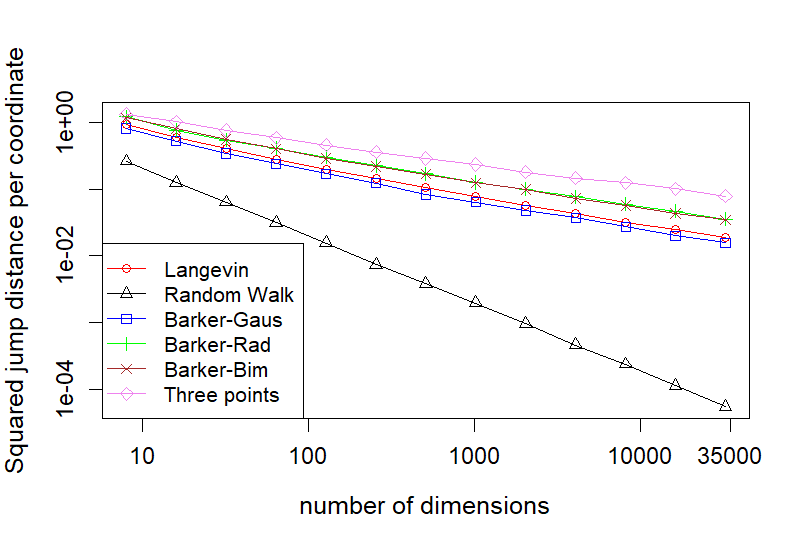}
\caption{ESJD against dimensionality. Left: Gaussian product target. Right: Hyperbolic product target.
}
\label{fig:esjd_prod}
\end{figure}

\subsection{Poisson random effects model}

To consider a realistic example in which the target distribution is not of the product form, we compare algorithms on the Poisson random effects model described in Section 6.3 of \cite{livingstone2019barker}. 
We compare the Barker algorithm with bi-modal noise to the Barker algorithm with Gaussian noise, the Langevin algorithm and the random walk Metropolis.
The main purpose of this example is to assess whether or not the above theoretical guidelines for the noise distribution in the Barker algorithm lead to good choices even when the target distribution does not have independent and identically distributed components.

The target distribution under consideration is a $51$-dimensional posterior distribution,  $p(\mu,\eta_1,\dots,\eta_{50}|\textbf{y})$, arising from 
a Poisson random effects model defined hierarchically as $\mu\sim \hbox{N}(0,10^2)$, 
$\eta_i|\mu\sim \hbox{N}(\mu,\sigma_{\eta}^2)$ and 
$y_{ij}|\eta_i\sim \hbox{Poisson}(\exp(\eta_i))$, 
independently for 
$i=1,\dots,50$ and
$j=1,\dots,5$. 
In our experiment we generate the observed data $\textbf{y}=(y_{ij})_{ij}$ 
from the model likelihood, i.e.\ sampling $y_{ij}\sim\hbox{Poisson}(\exp(\eta^*_i))$ independently, where $\eta^*_1,\dots,\eta^*_I$ are themselves generated independently from a $\hbox{N}(\mu^*,\sigma_{\eta}^2)$ distribution with $\mu^*=5$.
Here $\sigma_{\eta}$ is a fixed value and two scenarios are considered: in the first we set $\sigma_{\eta}=1$, while in the second we set $\sigma_{\eta}=3$. 
Effectively, $\sigma_{\eta}$ is a parameter that governs the heterogeneity across groups $i=1,\dots,50$ in the hierarchy. Thus, larger values of $\sigma_{\eta}$ lead to a target distribution with more heterogeneity of scales across coordinates, which make the adaptation and sampling process more challenging.

 In each case algorithmic tuning parameters consisting of a diagonal pre-conditioning matrix and a global scale are learned using Algorithm 4 of \cite{andrieu2008tutorial}, in the same manner as described in Section 6.3 of \cite{livingstone2019barker}.  
 We measure efficiency in terms of effective sample size for a given number of iterations since all algorithms under comparison, apart from Random Walk Metropolis, have a roughly equivalent cost per iteration, which is dominated by gradient computations.
 Figure \ref{fig:poisson_model} reports the median effective sample sizes across parameters for 100 independent runs of $5\times 10^4$ iterations of each algorithm. 
 All algorithms were randomly initialized by sampling parameter values from their prior distributions.
\begin{figure}[h!]
\centering
\includegraphics[width=0.49\linewidth]{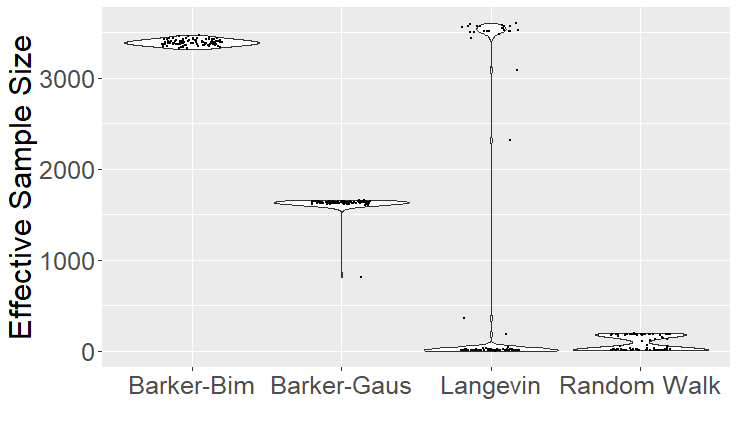}
\includegraphics[width=0.49\linewidth]{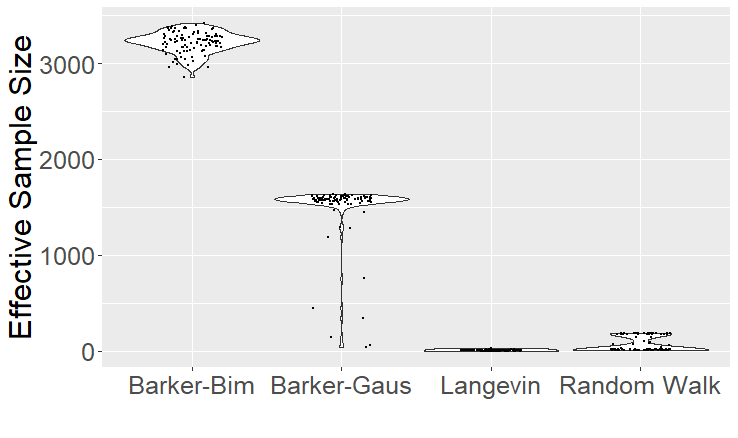}
\caption{Violin plots of median effective sample sizes (median across parameters) for 100 independent repetitions of each algorithm. Left: low heterogeneity across coordinates. Right: high heterogeneity across coordinates.
}
\label{fig:poisson_model}
\end{figure}

Both versions of the Barker algorithm appear to be more robust to different hyperparameter values than the Langevin algorithm, which sometimes performs well but sometimes poorly  in the first scenario and always performs poorly in the second. This is because the Langevin algorithm is very sensitive to tuning parameter selection, and the adaptive procedure fails to converge on sensible values for these across the time scales of the simulation.  The random walk Metropolis also performs poorly, which is largely explained by the dimension of the problem. The Barker algorithm with bi-modal noise is approximately two times as efficient in terms of effective sample size as the version with Gaussian noise in this setting.
More precisely, the median improvement in estimated effective sample size  
is 2.08 in scenario 1 (10th and 90th quantiles across the 100 repetitions 2.05 and 2.11 respectively) and 2.04 in scenario 2 (10th and 90th quantiles 1.98 and 2.14 respectively). 
Similar numbers were obtained when looking at minimum (rather than median) effective sample sizes across parameters.
These values suggest that the asymptotic theory developed in this paper, which quantifies bi-modal Barker to be 2.37 times more efficient than Gaussian Barker, is highly predictive of behaviours observed in practice also for moderate dimensionality and targets that have neither independent nor identically distributed coordinates. 
More generally, in all our simulations, we consistently observed a improvement in efficiency when going from Gaussian to bimodal Barker with factors typically between 2 and 2.5.

\subsection{A correlated example}
Unlike the Random Walk or Langevin algorithms, the Barker and three points schemes rely on a choice of coordinate system.
This may raise the concern of how much performance depends on specific choices of coordinate systems, and in particular whether the $\mathcal{O}(n^{1/3})$ scaling behaviour proved above is sensitive to the theoretical assumption that the target factorizes across the same coordinate axes as the proposal. 
Here we explore these issues numerically, performing high dimensional scaling experiments similar to Section \ref{sec:scaling_product} but for non-product form targets with significant correlation. In particular, we consider Gaussian distributions with non-diagonal covariance matrix $\Sigma$ chosen in two ways. In the first case we set  $\Sigma_{ii}=1$ for $i=1,\dots,n$ and 
$\Sigma_{ij}=\rho$ for $i\neq j$, while in the second we take $\Sigma_{ij}=\rho^{|i-j|}$. In both cases we set $\rho=0.99$ to depart drastically from the independence case. 
As in Section \ref{sec:scaling_product} we compute the expected square jump distance per coordinate.
For all algorithms under consideration we use isotropic proposals, meaning we do not use preconditioning to avoid aligning proposal and target axes, and we choose a step-size that is numerically optimized to maximize performance as measured by expected square jump distance.
The results are reported in Figure \ref{fig:scaling_correlated}.
As expected, all schemes perform worse than in the product case (note the different scales on the $y$-axes between Figure \ref{fig:esjd_prod} and Figure \ref{fig:scaling_correlated}), but the relative comparison between different schemes remains nearly unchanged and fully coherent with the theoretical predictions obtained from Sections \ref{sec:universal} and \ref{sec:optimal}. 
In particular the Langevin, Barker bi-modal and Barker Radamacher schemes perform nearly equivalently, while Barker with Gaussian noise performs around 2-2.5 times worse. 
Overall, the experiment suggests that the relative performances of the Random Walk, Langevin and Barker algorithms is not particularly sensitive to correlation and to the specific choice of coordinate system. 
\begin{figure}[h!]
\centering
\includegraphics[width=0.49\linewidth]{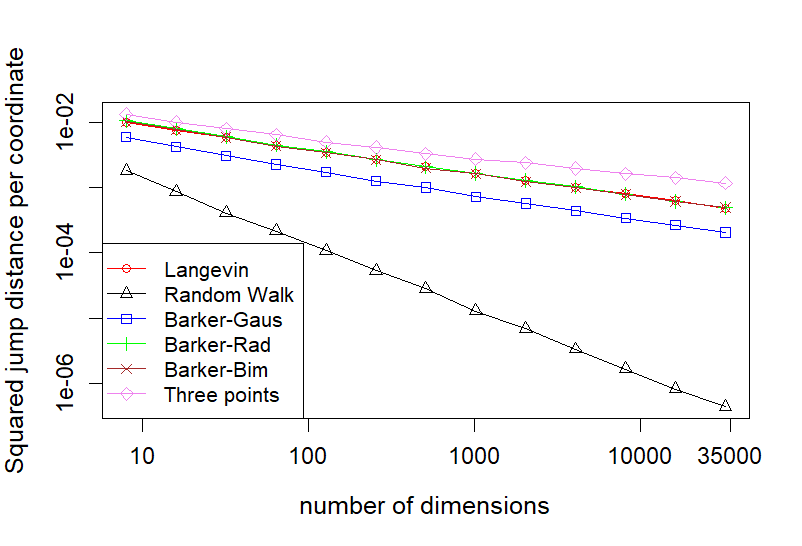}
\includegraphics[width=0.49\linewidth]{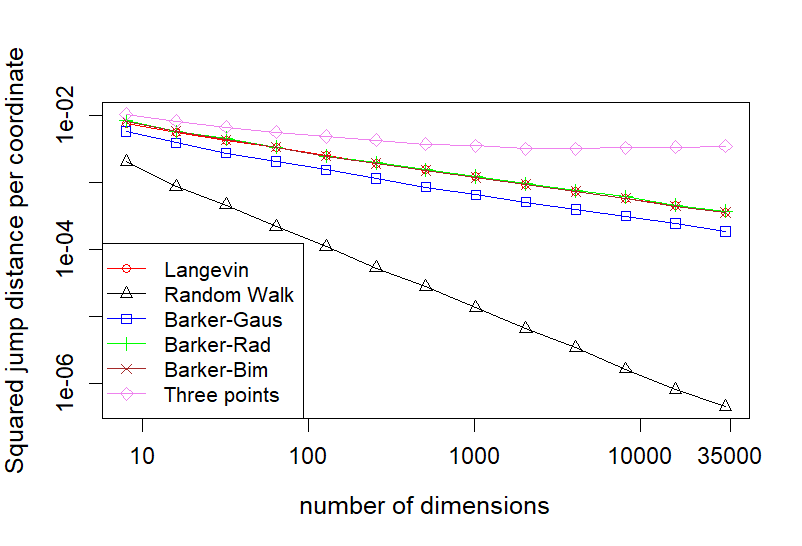}
\caption{Expected squared jump distance against dimensionality for correlated Gaussian targets. 
Left: $\Sigma_{ii}=1$ and $\Sigma_{ij}=0.99$ for $i\neq j$.
Right: $\Sigma_{ij}=0.99^{|i-j|}$.
}
\label{fig:scaling_correlated}
\end{figure}
The three point proposal performs well also in these correlated examples and actually performs surprisingly well when $\Sigma_{ij}=\rho^{|i-j|}$. 
Providing better understanding of such unexpected behaviour will be the subject of future research. 
Note, however, that the three point proposal implicitly uses knowledge about the target distribution when choosing the optimal values of the tuning parameters $\mathfrak{g}$ and $\mu_4$, and thus it has been given a somewhat unfair and potentially unrealistic advantage compared to the other schemes considered here.
In particular, in this example $\mathfrak{g}$ was chosen according to the optimal value in \eqref{eq:g_opt} with $B_\phi=1$ and $C_\phi=0$ as given by product-form Gaussian targets. 

\section{Discussion}
\label{sec:discussion}

The main results of this paper rely on a product form structure of $\pi$, and the corresponding optimal choice of locally-balanced algorithm also has a product form.  We have shown in Section \ref{sec:simulations} that this choice is still effective when the target distribution is no longer of the product form, and therefore recommend the use of the bi-modal Barker algorithm in practice.  It is surprising that using a non-local noise distribution of this kind results in such a pronounced and consistent improvement in efficiency across multiple examples.  We believe that this represents a good case study of theoretical analysis motivating new practical methodology that would be otherwise be hard to devise. 
It is also worth noting that any improvement in efficiency discussed above essentially comes for free, since 
all the gradient-based schemes considered in the paper have a comparable cost per iteration, which is typically dominated by gradient computations, and all schemes are equally simple to implement.

The detailed quantitative analysis and comparison of algorithms within the locally-balanced class in the high-dimensional limit is made possible by the mathematical framework developed in Section~3 of \cite{vogrinc2021counterexamples}. This framework identifies and uses only essential Taylor series expansions related to the limiting Kullback--Leibler divergence between a locally-balanced proposal and its time reversal.
Using this we establish optimal scaling for a broad class of algorithms including Barker and Langevin with a single unified proof, along with significantly weaker assumptions on the smoothness and tails of the target distribution than those in \cite{roberts1998optimal}.
Our results are at present restricted to limiting expected squared jump distances, rather than diffusion limits as in \cite{roberts1997weak} or \cite{roberts1998optimal}, but we believe that it is possible to uncover a limiting process under the current assumptions and such a line of enquiry is being pursued at the time of writing, continuing the axiomatic approach in \cite{vogrinc2021counterexamples}.

One intriguing finding of this work concerns the sub-optimality of the Langevin choice $g(t) = \surd t$ with proposal input noise $\mu$ chosen to be Gaussian.  This is by far the most historically popular choice within the first order locally-balanced class of algorithms. The results in this paper show that according to asymptotic efficiency as measured by expected squared jump distance not only is this combination of $\mu$ and $g$ not optimum, but in addition that the optimum choice of $\mu$ when $g(t) = \surd t$ is not Gaussian, and also that the optimum choice of $g$ when using Gaussian $\mu$ is not $\surd t$.

A natural open question is whether the insights of Section \ref{subsec:optimalfirstorder} can be used to create a novel new algorithm based on the three point proposal scheme.  We have resisted doing so here because such an algorithm would require a problem-specific choice of balancing function and some appropriate randomisation of the noise distribution to prevent reducibility issues.  It may be possible, however, to design an adaptive Markov chain Monte Carlo method that is able to learn these quantities during the simulation.  We look forward to designing practical methodology based on the insights of Section \ref{subsec:optimalfirstorder} in subsequent work.








\section*{Acknowledgement} JV was supported by a UK Engineering and Physical Sciences Research Council grants EP/R022100/1 and EP/T004134/1.
SL is supported by a UK Engineering and Physical Sciences Research Council grant EP/V055380/1.


\appendix

\section{Sampling from optimal
locally-balanced kernels 
with Gaussian noise} \label{app:sampling}

We consider here fixing $\mu$ to be standard Gaussian, as is the choice made in default versions of the Langevin and Barker algorithms.  Setting $\mu_6 = 15$ and $\mu_4 = 3$ into \eqref{eq:gopt_fixedmu} implies that the maximum efficiency is found by choosing
$\mathfrak{g}=C_\phi/(10B_\phi)-1/5$. 
Such value of $\mathfrak{g}$ can be imposed using, e.g., the family of balancing functions defined in \eqref{eq:g_family}.

Sampling from a first order locally-balanced proposal using \eqref{eq:g_family} and standard Gaussian $\mu$ can then be done using Algorithm \ref{alg:g_family} applied to each coordinate.  
Also, any choice of balancing function within the family \eqref{eq:g_family} and $\mu$ chosen as a mixture of Gaussians results in a proposal density with analytically tractable normalising constant, meaning Metropolis--Hastings acceptance rates can be evaluated.

\begin{algorithm}
\caption{Simulate from the locally-balanced proposal using \eqref{eq:g_family}.}\label{alg:g_family}
\begin{tabbing}
    \qquad Require: $x\in\R$ and $\sigma > 0$ \\
    \qquad Set $p_x \leftarrow (1+e^{-\gamma\sigma^2\phi'(x)^2})^{-1}$\\
    \qquad Draw $u_x \sim \text{Bernoulli}(p_x)$ and set $b_x \leftarrow 2u_x - 1$\\
    \qquad Draw $z \sim N(0,1)$ and set $y \leftarrow x + \left(1/2+b_x\gamma\right)\sigma^2\phi'(x) + \sigma z$ \\
    \qquad Output $y$
\end{tabbing}
\end{algorithm}

The procedure can be viewed as taking elements of both the Langevin and Barker proposal, as the standard Langevin proposal is modified according to a random variable that takes value $+1$ or $-1$ with some probability that is skewed in the direction of the gradient as in the Barker proposal.
An interesting aspect of the Gaussian noise $\mu$ is that it is independent of the coordinate system, making the Algorithm~\ref{alg:g_family} depend on the coordinate system only through the choice of the flipping directions.

\begin{proposition} \label{prop:g_family_result}
Algorithm \ref{alg:g_family} produces a sample from a distribution with density proportional to $\sigma^{-1}\mu((y-x)/\sigma)g_\gamma(e^{\phi'(x)(y-x)})$. 
\end{proposition}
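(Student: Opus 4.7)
The plan is to verify directly that the algorithm samples from a two-component Gaussian mixture whose density coincides, up to proportionality, with $\sigma^{-1}\mu((y-x)/\sigma)g_\gamma(e^{\phi'(x)(y-x)})$.

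First, I would substitute the definition \eqref{eq:g_family} to write
\[
g_\gamma(e^{\phi'(x)(y-x)}) = \tfrac{1}{2}\left(e^{(1/2+\gamma)\phi'(x)(y-x)} + e^{(1/2-\gamma)\phi'(x)(y-x)}\right),
\]
so that the target density in $y$ becomes a sum of two terms proportional to $\exp(-(y-x)^2/(2\sigma^2) + a\phi'(x)(y-x))$ with $a\in\{1/2+\gamma,\,1/2-\gamma\}$. For each such $a$, completing the square in $y$ yields an unnormalised Gaussian density with mean $x + a\sigma^2\phi'(x)$ and variance $\sigma^2$, times the prefactor $e^{a^2\sigma^2\phi'(x)^2/2}$.

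Next, I would read off the resulting mixture. The unnormalised mixture weight associated with $a=1/2+\gamma$ is proportional to $e^{(1/2+\gamma)^2\sigma^2\phi'(x)^2/2}$ and the other to $e^{(1/2-\gamma)^2\sigma^2\phi'(x)^2/2}$. Using the identity $(1/2+\gamma)^2-(1/2-\gamma)^2 = 2\gamma$, the normalised weight on the first component simplifies to
\[
\frac{e^{(1/2+\gamma)^2\sigma^2\phi'(x)^2/2}}{e^{(1/2+\gamma)^2\sigma^2\phi'(x)^2/2} + e^{(1/2-\gamma)^2\sigma^2\phi'(x)^2/2}} = \frac{1}{1+e^{-\gamma\sigma^2\phi'(x)^2}} = p_x,
\]
matching the quantity defined in Algorithm \ref{alg:g_family}.

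Finally, I would match the mixture structure with the algorithm: drawing $u_x\sim \mathrm{Bernoulli}(p_x)$ and setting $b_x=2u_x-1\in\{-1,+1\}$ selects the mean $x+(1/2+b_x\gamma)\sigma^2\phi'(x)$ with the correct probabilities, and $y \leftarrow x+(1/2+b_x\gamma)\sigma^2\phi'(x)+\sigma z$ with $z\sim N(0,1)$ adds independent $N(0,\sigma^2)$ noise, producing exactly a draw from the two-component Gaussian mixture identified above. There is no real obstacle in this argument: it reduces to a completion-of-squares calculation plus the algebraic simplification of the mixing weights, so the only care required is bookkeeping of signs and normalising constants.
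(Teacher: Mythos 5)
Your proof is correct and is essentially the same calculation as the paper's, just run in the opposite direction: the paper starts from the algorithm's two-component Gaussian mixture and expands its density to recover $\sigma^{-1}\mu((y-x)/\sigma)g_\gamma(e^{\phi'(x)(y-x)})$, whereas you start from that density, complete the square to identify the mixture components, and check the weights match $p_x$. Both hinge on the same completion-of-squares step and the same algebraic identity for the mixing weights, so there is nothing to add.
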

\begin{proof}
First note that
\[
p(x)
=
\frac{e^{-\frac12(\gamma+\frac12)^2\sigma^2\phi'(x)}}{e^{-\frac12(\gamma+\frac12)^2\sigma^2\phi'(x)}+e^{\frac12(\gamma-\frac12)^2\sigma^2\phi'(x)}}
\]
so that
\[
p(x)e^{-\frac12(\gamma+\frac12)^2\sigma^2\phi'(x)}
=
(1-p(x))e^{-\frac12(\gamma-\frac12)^2\sigma^2\phi'(x)}\,.
\]
The random variable $Y$ is a mixture of two Gaussians with the same variance. Setting $\mu_x^+ = x+(1/2 + \gamma)\sigma^2\phi'(x)$ and $\mu_x^- = x+(1/2 - \gamma)\sigma^2\phi'(x)$, the density of $\xi(y)$ of $Y$ satisfies
\begin{align*}
\sigma\sqrt{2\pi}\xi(y)
&= 
p(x)e^{-\frac{1}{2\sigma^2}(y-\mu_x^+)^2}
+
\left(1-p(x)\right)e^{-\frac{1}{2\sigma^2}(y-\mu_x^-)^2}
\\
&=
e^{-\frac{(y-x)^2}{2\sigma^2}}\left(p(x)e^{-\frac{1}{2}\left(\gamma+\frac{1}{2}\right)^2\sigma^2\phi'(x)^2}\right)\left(e^{\left(\frac{1}{2}+\gamma\right)(y-x)\phi'(x)}+e^{\left(\frac{1}{2}-\gamma\right)(y-x)\phi'(x)}\right)
\\
&\propto
\frac{1}{2}\left((e^{(y-x)\phi'(x)})^{\frac{1}{2}+\gamma}+(e^{(y-x)\phi'(x)})^{\frac{1}{2}-\gamma}\right)
\times e^{-\frac{(y-x)^2}{2\sigma^2}}
\\
&\propto 
g_\gamma(e^{\phi'(x)(y-x)})\sigma^{-1}\mu((y-x)/\sigma)
\end{align*}
as required.
\end{proof}

\section{Illustration of different choices of $\mu_4$ in the optimal choice of first order locally-balanced proposal}
Here we compare the three point proposal with different choices of $\mu_4$ on the Hyperbolic target distribution considered in Section 5.1 of the paper. 
For any value of $\mu_4$, the optimal value of $\mathfrak{g}$ as given by Proposition 5 in the paper is used.
Theory suggests that optimal asymptotic performances are given by choosing $\mu_4$ arbitrarily close to $1$ but, as discussed in the paper, values of  $\mu_4$ close to $1$ may need very large dimensionality for the actual asymptotic regime to kick in. 
\begin{figure}[h!]
\centering
\includegraphics[width=0.9\linewidth]{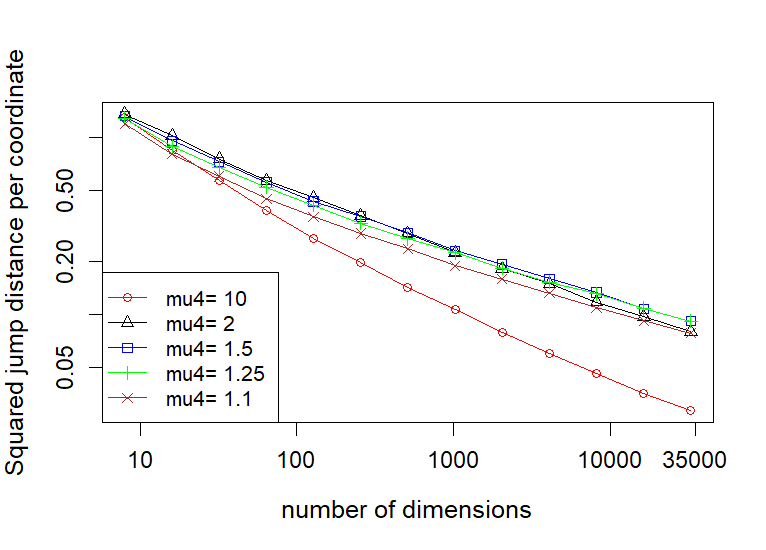}
\caption{Expected squared jump distance against dimensionality for various three point proposals on an hyperbolic product target distribution.
}
\label{fig:esjd_mu4_hyperb}
\end{figure}
Figure \ref{fig:esjd_mu4_hyperb} illustrates this phenomenon. 
For small dimensionality, the value $\mu_4=2$ is best (among the ones considered) but as dimensionality increases $\mu_4=1.5$ starts outperforming and then $\mu_4=1.25$. Dimension $n=35000$ is not yet sufficient to for $\mu_4=1.1$ to outperform $\mu_4=1.25$, although theory suggests that this will eventually happen.




\section{Miscellaneous proofs}

\subsection{Section \ref{sec:lb}}

\begin{proof}[Proof of Proposition \ref{prop:g_to_even}]
For the first part of the statement, direct calculation gives
$$
tg_h(1/t) = t \times t^{-1/2} h(-\log t) = t^{1/2} h(\log t) = g_h(t).
$$
Similarly for the second part
$$
h_g(-x) = e^{x/2}g_h(e^{-x}) = e^{-x/2}g_h(e^x) = h_g(x)
$$
using that $g_h(e^{-x}) = e^{-x}g_h(e^x)$. Noting also that $h_g:\R \to [0,\infty)$ shows that $h_g \in \mathcal{H}$.
\end{proof}

\subsection{Section \ref{sec:universal}}

\begin{proof}[Proof of Theorem \ref{thm:log-MH}]
See Section~3, particularly the proof of Theorem~8, in \cite{vogrinc2021counterexamples}.
\end{proof}

\begin{proof}[Proof of Proposition \ref{prop:ass1sufficient}]
(i) Easy to verify.

(ii) Since $\int_{\R}z^\xi \mu(dz)<\infty$, we have
\[
\int_{\R}e^{b(az)}|z|^\xi \mu(dz)= \int_{\R}g(e^{az})|z|^\xi \mu(dz)
\leq
\|g\|_\infty\int_{\R}|z|^\xi \mu(dz)\,.
\]
On the other hand $b$ is non decreasing and $b(0)=0$ so
\[
\int_{\R}e^{b(az)}\mu(dz)
\geq
\int_0^\infty \mu(dz)
=
\frac{1}{2}\,.
\]

(iii) This essentially holds because $b'$ is bounded and hence $b$ is at most linear. Clearly $\mu$ has finite polynomial moments.
Next note that by fundamental theorem of calculus and $b(0)=0$ we have
\[
b(az)
=
\int_0^1 azb'(azs)ds\,.
\]
Hence
\begin{align*}
    \int_{\R}e^{b(az)}|z|^\xi \mu(z)dz
    &=
    \int_{\R}e^{az\int_0^1b'(saz)ds}|z|^\xi \mu(z)dz
    \\
    &\leq
    \tilde C_\mu\left(1+\left|a\int_0^1b'(saz)ds\right|^{\tilde\beta}\right)\int_{\R}e^{az\int_0^1b'(saz)ds}\mu(z)dz
    \\
    &\leq
    \tilde C_\mu\max(1,\|b'\|_\infty^{\tilde{\beta}})\left(1+|a|^{\tilde\beta}\right)\int_{\R}e^{b(az)}\mu(z)dz\,.
\end{align*}

(iv) We will prove this implies a special instance of (iii). Due to symmetry of $\mu$ we may assume without loss of generality that $a>0$.
Firstly, $\mu$ has finite moment generating function since for any $b>a$ and $c>0$ we have
\[
\int_\R e^{az}\mu(z)dz
\leq
2\int_0^\infty e^{bz}e^{(a-b)z}\mu(z)dz
\leq
2c \left(\sup_{z\leq c}e^{az}\mu(z)\right) + \frac{2}{b-a}e^{-c(b-a)} \left(\sup_{z> c}e^{bz}\mu(z)\right)\,.
\]

Secondly, we will use inequalities $|z|^{\lambda}1_{[-1,1]^c}(z)\leq |z|^{\lambda+1}$ and  $e^a=e^{a\int_0^\infty2\mu(z)dz}\leq 2\int_0^\infty e^{az}\mu(z)dz\leq  2\int_\R e^{az}\mu(z)dz$, which follows by Jensen's inequality. Integration by parts implies that for any $\lambda>p-1$ we have
\begin{align*}
    \int_\R e^{az} |z|^{\lambda} \mu(z)dz
    &\leq
    \int_{[-1,1]} e^{az}\mu(z)dz+
    A\int_{[-1,1]^c} e^{az}|z|^{\lambda-p}\mu(z)dz-B\int_{[-1,1]^c} e^{az}|z|^{\lambda-p}z\mu'(z)dz
    \\
    &=
    \int_{[-1,1]} e^{az}\mu(z)dz+
    A\int_{[-1,1]^c} e^{az}|z|^{\lambda-p}\mu(z)dz
    \\
    &\qquad +
    B\int_{[-1,1]^c} e^{az}((\lambda-p+1)|z|^{\lambda-p}+a|z|^{\lambda-p}z)\mu(z)dz
    +B\mu(1)(e^a-e^{-a})
    \\
    &\leq
    \left(1+2B\mu(1)\right)\int_{\R} e^{az}\mu(z)dz+
   \left(A+B(\lambda-p+1)+ |a|B\right)\int_{\R} e^{az}|z|^{\lambda-p+1}\mu(z)dz\,.
\end{align*}
Thus, we have successfully reduced the power of $z$ in the integrand by $p-1$ at the expense of producing an $|a|$. Noting that $|z|^\xi\leq 1+|z|^{\lceil\frac{\xi}{p-1}\rceil (p-1)}$ and recursively applying the same argument establishes the claim. 
\end{proof}

\begin{proof}[Proof of Theorem \ref{thm:CLT}]

Denote $b(x)=\log g(e^x)$ and note that $b(0)=0$, $b'(0)=1/2$ and $b''(0)=1/4 +g''(1)$. First consider $n$ as fixed. By definition for any $y\in\R$
\begin{equation}\label{eq:rho_general}
    \rho_n(x,y)
=
\phi(y)-\phi(x)+b(\phi'(y)(x-y))-b(\phi'(x)(y-x))-\log(Z_{\sigma_n}(x))+\log(Z_{\sigma_n}(y))\,.
\end{equation}
We will be using the fundamental theorem of calculus
\begin{equation}\label{eq:ExactTaylor}
    U(1)-U(0)
    =
    \int_0^1U'(s)ds
\end{equation}
for various functions $U$. 
Denote $w=(y-x)/\sigma_n$, and note that $\Zs(x)=\int_\R e^{b(\phi'(x)\sigma_nz)}\mu(dz)$ by an analogous substitution. Then note that
$$
    \log\left(\frac{\Zs(x+\sigma_n w)}{\Zs(x)}\right)
    =
    \sigma^2_n w\int_{0}^1\phi''(x+t\sigma_n w)\frac{\int_\R e^{b(\phi'(x+t\sigma_n w)\sigma_n z)}b'(\phi'(x+t\sigma_n w)\sigma_n z)z\mu(dz)}{\int_\R e^{b(\phi'(x+t\sigma_n w)\sigma_n z)}\mu(dz)}dt
$$
using \eqref{eq:ExactTaylor} with the function $U_1(t)=\log(\Zs(x+t\sigma_n w)$. Using \eqref{eq:ExactTaylor} again with the function $U_2(s)=b'(s\phi'(x+t\sigma_n w)\sigma_n z)$ and recalling that $b'(0)= 1/2$, this can be written as
\begin{align*}
    &\frac{\sigma_n^2}{2}w\int_{0}^1\phi''(x+t\sigma w)\frac{\int_\R e^{b(\phi'(x+t\sigma_n w)\sigma_n z)}z\mu(dz)}{\int_\R e^{b(\phi'(x+t\sigma_n w)\sigma_n z)}\mu(dz)}dt
    \\
    &\quad +
    \sigma_n^3w\iint_{[0,1]^2}\phi''(x+t\sigma_n w)\phi'(x+t\sigma_n w)\frac{\int_\R e^{b(\phi'(x+t\sigma_n w)\sigma_n z)} b''(s\phi'(x+t\sigma_n w)\sigma_n z) z^2\mu(dz)}{\int_\R e^{b(\phi'(x+t\sigma_n w)\sigma_n z)}\mu(dz)}dtds.
\end{align*}
Using \eqref{eq:ExactTaylor} in the numerator of the first term with the function $U_3(u)=e^{b(u\phi'(x+t\sigma_n w)\sigma_n z)}$ and recalling that $b(0)=0$ gives
\begin{align*}
&\frac{\sigma_n^3w}{2}\iint_{[0,1]^2}\phi''(x+t\sigma_n w)\phi'(x+t\sigma_n w)\frac{\int_\R e^{b(u\phi'(x+t\sigma_n w)\sigma_n z)}b'(u\phi'(x+t\sigma_n w)\sigma_n z) z^2\mu(dz)}{\int_\R e^{b(\phi'(x+t\sigma_n w)\sigma_n z)}\mu(dz)}dtdu
\\
&\qquad +
\sigma_n^3w\iint_{[0,1]^2}\phi''(x+t\sigma_n w)\phi'(x+t\sigma_n w)\frac{\int_\R e^{b(\phi'(x+t\sigma_n w)\sigma_n z)} b''(s\phi'(x+t\sigma_n w)\sigma_n z) z^2\mu(dz)}{\int_\R e^{b(\phi'(x+t\sigma_n w)\sigma_n z)}\mu(dz)}dtds.
\end{align*}
And further simplification leads to the expression
\begin{align*}
\sigma_n^3w & \iint_{[0,1]^2}\phi''(x+t\sigma_n w)\phi'(x+t\sigma_n w)
\\
&\qquad \times
    \frac{\int_\R e^{b(\phi'(x+t\sigma_n w)\sigma_n z)} \left(\frac{1}{2}b'(s\phi'(x+t\sigma_n w)\sigma_n z)+b''(s\phi'(x+t\sigma_n w)\sigma_n z)\right) z^2\mu(dz)}{\int_\R e^{b(\phi'(x+t\sigma_n w)\sigma_n z)}\mu(dz)}dtds 
\\
&= 
    \sigma_n^3w\phi''(x)\phi'(x)\left(\frac{1}{2}b'(0)+b''(0)\right)
    + R_{1,n}(x,w) + R_{2,n}(x,w) + R_{3,n}(x,w)
\\
&= 
    \sigma_n^3w\phi''(x)\phi'(x)\left(\frac{1}{2}+g''(1)\right)
    + R_{1,n}(x,w) + R_{2,n}(x,w) + R_{3,n}(x,w),
\end{align*}
where
\begin{align*}
    R_{1,n}(x,w)
    &=
    \sigma_n^3w\left(\frac{1}{2}b'(0)+b''(0)\right)\int_0^1\left(\phi''(x+t\sigma_n w)\phi'(x+t\sigma_n w)-\phi''(x)\phi'(x)\right)dt
    \\
    R_{2,n}(x,w)
    &=
    \sigma_n^3w\left(\frac{1}{2}b'(0)+b''(0)\right)\int_0^1\phi''(x+t\sigma_n w)\phi'(x+t\sigma_n w)
    \frac{\int_\R e^{b(\phi'(x+t\sigma_n w)\sigma_n z)}  \left(z^2-1\right)\mu(dz)}{\int_\R e^{b(\phi'(x+t\sigma_n w)\sigma_n z)}\mu(dz)}dt
    \\
    R_{3,n}(x,w)
    &=
    \sigma_n^3w\int_{[0,1]^2}\phi''(x+t\sigma_n w)\phi'(x+t\sigma_n w)dt
    \\
    \times &
    \frac{\int_\R e^{b(\phi'(x+t\sigma_n w)\sigma_n z)}  \left(\frac{1}{2}b'(s\phi'(x+t\sigma_n w)\sigma_n z)+b''(s\phi'(x+t\sigma_n w)\sigma_n z)-\frac{1}{2}b'(0)-b''(0)\right) z^2\mu(dz)}{\int_\R e^{b(\phi'(x+t\sigma_n w)\sigma_n z)}\mu(dz)}dtds\,.
\end{align*}

To expand the remaining terms of \eqref{eq:rho_general} we use the exact third order Taylor expansion
\begin{align*}
b(t)
&= 
b(0)+tb'(0)+t^2\frac{b''(0)}{2}+t^3\frac{b'''(0)}{6}+\frac{t^3}{2}\int_{0}^1(b'''(ut)-b'''(0))(1-u)^2du 
\\
&= 
b(0)+tb'(0)+t^2\frac{b''(0)}{2}+\frac{t^3}{2}\int_{0}^1b'''(ut)(1-u)^2du
\end{align*}
on each of the terms involving $b$ to obtain
\begin{align*}
    b(-\phi'(x+\sigma_n w)\sigma_n w)-b(\phi'(x)\sigma_n w)
    &= 
    -\frac{\sigma_n w}{2}\left(\phi'(x)+\phi'(x+\sigma_n w)\right)
    \\
    &\qquad + 
    \sigma_n^2 w^2\left(\frac{1}{8}+\frac{g''(1)}{2}\right)\left(\phi'(x+\sigma_n w)^2-\phi'(x)^2\right)
    \\
    &\qquad +  R_{6,n}(x,w)+R_{7,n}(x,w)\,,
    \end{align*}
where 
\begin{align*}
    R_{6,n}(x,w) &= 
    -\frac{1}{2}\sigma_n^3 w^3\phi'(x+\sigma_n w)^3\int_0^1b'''(-u\phi'(x+\sigma_n w)\sigma_n w)(1-u)^2du\,,
    \\
    R_{7,n}(x,w) &= 
    -\frac{1}{2}\sigma_n^3w^3\phi'(x)^3\int_0^1b'''(u\phi'(x)\sigma_n w)(1-u)^2du\,.
\end{align*}
Taking one half of the difference of the exact Taylor expansions
\begin{align*}
\phi(x+\sigma_n w)-\phi(x)	 
&= 		
\sigma_n w\phi'(x) + \sigma_n^2w^2\int_0^1\phi''(x+u\sigma_n w)(1-u)du\,,
\\
\phi(x)-\phi(x+\sigma_n w)	 
&=
-\sigma_n w\phi'(x+\sigma_n w) + \sigma_n^2w^2\int_0^1\phi''(x+(1-u)\sigma_n w)(1-u)du
\end{align*}
and setting $v=1-u$ reveals that
\begin{align*}
\phi(x+\sigma_n w)-\phi(x)-\frac{\sigma_n w}{2}\left(\phi'(x)+\phi'(x+\sigma_n w)\right)
&= 
\frac{\sigma_n^2w^2}{2}\int_0^1\phi''(x+u\sigma_n w)(1-2u)du
\\
&= 
\frac{\sigma_n^2w^2}{2}\int_0^1\left(\phi''(x+u\sigma_n w)-\phi''(x)\right)(1-2u)du
\\
&= 
\frac{\sigma_n^3w^3}{2}\int_0^1u(1-2u)\int_0^1\phi'''(x+uv\sigma_n w)dvdu
\\
&= 
-\frac{\sigma_n^3w^3}{12}\phi'''(x)
 +  R_{4,n}(x,w),
\end{align*}
where
\[
R_{4,n}(x,w)
 = 
\frac{\sigma_n^3w^3}{2}\int_0^1u(1-2u)\int_0^1\left(\phi'''(x+uv\sigma_n w)-\phi'''(x)\right)dvdu\,.
\]
Similarly,
\begin{align*}
\sigma_n^2 w^2\left(\frac{1}{8}+\frac{g''(1)}{2}\right)\left(\phi'(x+\sigma_n w)^2-\phi'(x)^2\right)
&= 
\sigma_n^3 w^3\left(\frac{1}{4}+g''(1)\right)\int_0^1 \phi'\phi''(x+u\sigma_n w)du
\\
&=  
\sigma_n^3 w^3\left(\frac{1}{4}+g''(1)\right)\phi'(x)\phi''(x)
 + 
R_{5,n}(x,w),
\end{align*}
where
\[
R_{5,n}(x,w)
 = 
\sigma_n^3 w^3\left(\frac{1}{4}+g''(1)\right)\int_0^1 \left(\phi'\phi''(x+u\sigma_n w)-\phi'\phi''(x)\right)du.
\]

 Next, we will denote with $\tilde\E_n$ the expectation with respect to the measure $\Zs^{-1}(x)e^{b(\phi'(x)\sigma_n w)}\pi(x)dx\mu(dw)$. We will show that $\tilde\E_n[R_{i,n}^2]$ decays faster than $\sigma_n^6$ for $i=1,2,\dots,7$ as $n\to\infty$. Note that Assumption~\ref{ass1} iii) is the tool enabling the control of the Radon-Nikodym derivatives of various proposals (indexed by $n$) with respect to $\mu$, making it possible for us to represent these expectations with respect to the measure $\pi(x)dx\mu(dw)$. 
Using $a \lesssim b$ to denote that $a \leq cb$ for some positive finite $c$ independent of $n$, the bound for the first term $R_{1,n}$ follows by Assumption~\ref{ass1} i) and iii) as
\begin{align*}
    \tilde\E_n[R_{1,n}^2]
    &\lesssim 
    \sigma_n^6
    \iint_{\R^2}\int_0^1\left(\phi''(x+t\sigma_n w)\phi'(x+t\sigma_n w)-\phi''(x)\phi'(x)\right)^2w^2\frac{e^{b(\phi'(x)\sigma_n w)}}{\Zs(x)}dt\mu(dw)\pi(x)dx
    \\
    &\leq 
    \sigma_n^6
    \iint_{\R^2}\int_0^1K(x)^2w^2\max(|t\sigma_n w|^H,|t\sigma_n w|^\gamma)^2\frac{e^{b(\phi'(x)\sigma_n w)}}{\Zs(x)}dt\mu(dw)\pi(x)dx
    \\
    &\lesssim 
    \sigma_n^{6+2H}
    \int_\R K(x)^2\pi(x) \int_\R |w|^\xi\frac{e^{b(\phi'(x)\sigma_n w)}}{\Zs(x)}\mu(dw)dx
    \\
    &\lesssim
    \sigma_n^{6+2H+\beta}
    \int_\R K(x)^2(1+|\phi'(x)|^\beta)\pi(x)dx\,.
\end{align*}
To bound the second term $\tilde\E_n R_{2,n}$, note that $\lim_{\sigma_n\to 0} \sigma_n^{-3}R_{2,n}=0$ for every $x$ and $w$, so we only need to provide a dominating bound for the integrand.  Using Assumption~\ref{ass1} iii) gives that this is
\begin{multline*}
    \lesssim 
    \iint_{\R^2}\int_0^1\phi''(x+t\sigma_n w)^2\phi'(x+t\sigma_n w)^2\left(1+\sigma_n^{2\beta}\phi'(x+t\sigma_n w)^{2\beta}\right) \frac{e^{b(\phi'(x)\sigma_n w)}}{\Zs(x)}w^2dt\mu(dw)\pi(x)dx.
\end{multline*}
To show that this is finite we use Assumption~\ref{ass1}~i) to control the perturbation of functions $\phi''\phi'$ and $\phi''|\phi'|^{1+\beta}$ from $x$ to $x+t\sigma_n w$ and  Assumption~\ref{ass1}~iii) to control the Radon-Nikodym derivative $\Zs(x)^{-1}e^{b(\phi'(x)\sigma_nw)}$ (as in the argument for the term $\tilde\E_n[R_{1,n}]$):
\begin{align*}
    &\iint_{\R^2}\int_0^1\phi''(x+t\sigma_n w)^2\phi'(x+t\sigma_n w)^2\left(1+|\phi'(x+t\sigma_n w)|^{\beta}\right)^2 w^2\frac{e^{b(\phi'(x)\sigma_n w)}}{\Zs(x)}dt\mu(dw)\pi(x)dx
    \\
    &\quad\lesssim 
    \iint_{\R^2}\phi''(x)^2\phi'(x)^2\left(1+|\phi'(x)|^{\beta}\right)^2 w^2\frac{e^{b(\phi'(x)\sigma_n w)}}{\Zs(x)}\mu(dw)\pi(x)dx
    \\
    &\qquad +
    \iint_{\R^2}\int_0^1\left(\phi''(x+t\sigma_n w)\phi'(x+t\sigma_n w)-\phi''(x)\phi'(x)\right)^2dtw^2\frac{e^{b(\phi'(x)\sigma_n w)}}{\Zs(x)}\mu(dw)\pi(x)dx
    \\
    &\qquad +
    \iint_{\R^2}\int_0^1\left(\phi''(x+t\sigma_n w)|\phi'(x+t\sigma_n w)|^{1+\beta}-\phi''(x)|\phi'(x)|^{1+\beta}\right)^2dtw^2\frac{e^{b(\phi'(x)\sigma_n w)}}{\Zs(x)}\mu(dw)\pi(x)dx
    \\
    &\quad \lesssim 
    \int_\R \phi''(x)^2\phi'(x)^2(1+|\phi'(x)|^{\beta})^3\pi(x)dx+\sigma_n^{2H+\beta}
    \int_\R K(x)^2\pi(x)\int_\R |w|^\xi\frac{e^{b(\phi'(x)\sigma_n w)}}{\Zs(x)}\mu(dw)dx
    \\
    &\quad \lesssim
    \int_\R \phi''(x)^2\phi'(x)^2(1+|\phi'(x)|^{\beta})^3\pi(x)dx+\int_\R K(x)^2(1+|\phi'(x)|^\beta)\pi(x)dx.
\end{align*}
The term $\tilde\E[R_{3,n}]$ is handled similarly with the 
dominated convergence theorem as $\sigma_n^{-3}R_3$ converges point-wise to zero as $\sigma_n\to0$. The dominating bound is very similar as for $R_{2,n}$ once we notice that $b'$ and $b''$ are both bounded by Assumption~\ref{ass1} ii).

By Assumption~\ref{ass1}~i) the terms $R_{4,n}$ and $R_{5,n}$ are bounded absolutely by a constant multiplier of \\ $K(x) \sigma_n^{3+H}\max(|w|^{3+H},|w|^{3+\gamma})$, hence $\lim_{\sigma_n\to 0}\sigma_n^{-6}\tilde\E_n[R_{4,n}^2]=\lim_{\sigma_n\to 0}\sigma_n^{-6}\tilde\E_n[R_{5,n}^2]=0$\,.
The square of $R_{7,n}$ can be seen to decay faster than $\sigma_n^6$ by the Dominated convergence theorem, since $b'''$ is bounded, $b'''(0)=0$ and integrability guaranteed by Assumption~\ref{ass1}~i). Similarly
\begin{multline*}
R_{6,n} = -\frac{1}{2}\sigma_n^3 w^3\phi'(x)^3\int_0^1b'''(-u\phi'(x+\sigma_n w)\sigma_n w)(1-u)^2du\\
-
\frac{1}{2}\sigma_n^3 w^3\left(\phi'(x+\sigma_n w)^3-\phi'(x)^3\right)\int_0^1b'''(-u\phi'(x+\sigma_n w)\sigma_n w)(1-u)^2du\,,
\end{multline*}
so again the square of the first part decays faster than $\sigma_n^6$ by the Dominated convergence theorem (as for $R_{7,n}$) and the second part is dominated by a constant multiplier of $K(x)\sigma_n^{3+H}\max(|w|^{3+H},|w|^{3+\gamma)}$ due to Assumption~\ref{ass1} i).

Write
\[
T(x,w)=w^3\left(-\frac{1}{12}\phi'''(x)+\left(\frac{1}{4}+g''(1)\right)\phi'(x)\phi''(x)\right)
-
w\left(\frac12+g''(1)\right)\phi'(x)\phi''(x)
\]
and set $R_n(x,w)=\sum_{i=1}^7R_{i,n}(x,w)$. Clearly
\[
\tilde\E_n[\rho_n^2]
 = 
\tilde\E_n\left[(\sigma^3T+R_n)^2\right]
 = 
\tilde{\E}_n\left[\sigma^6T^2\right]
+2\tilde{\E}_n\left[\sigma^3TR_n\right]
+\tilde{\E}_n\left[R_n^2\right].
\]
Using the inequalities $\tilde{\E}_n \left[ TR_n \right]^2 \leq \tilde{\E}_n[T^2]\tilde{\E}_n[R_n^2]$
and
$\tilde{\E}_n[R_n^2]
 \leq 
7\sum_{i=1}^7\tilde{\E}_n[R_{i,n}^2]$
we see that the last two terms decay faster than $\sigma_n^{-6}$. We can now identify the limiting $\theta^2$ using the Dominated convergence theorem. We show that $\lim_{n\to\infty}\sigma_n^{-6}\E[\rho_n^2]=\lim_{n\to\infty}\tilde{\E}_n[T^2]=\E[T^2]=\theta^2$ where the last expectation is with respect to $\pi(x)dx\mu(dw)$. Indeed,
\[
    \tilde{\E}_n[T^2]
    =
    \int_{\R}T^2(x,w)\frac{e^{\sigma_n w\phi'(x)}}{\Zs(x)}\mu(dw)\pi(x)dx\,.
\]
The integrand converges point-wise to $T^2(x,w)$ and is dominated by $T^2(x,w)(1+|\phi'(x)|^\beta)$ which is integrable by Assumption\ref{ass1} i). Expanding the expression for $T$ and integrating with respect to $\mu(dw)\pi(x)dx$ leads to the claimed form of $\theta^2$.

Finally, to finish the proof using Theorem~\ref{thm:log-MH} notice that by choosing $\sigma_n$ such that  $n\sigma_n^6\to\ell^6$ we get
\[
\lim_{n\to\infty}n\E[\rho_n^2]
=
\lim_{n\to\infty}n\sigma_n^6\sigma_n^{-6}\E[\rho_n^2]
=
\ell^6\theta^2\,.
\]
We also need to prove that $\lim_{n\to\infty}\sigma_n^{-6}\tilde{\E}_n\left[\rho_n^21_{\rho_n<-\sigma_n}\right]=0$. Indeed,
\begin{align*}
\tilde{\E}_n\left[\rho_n^21_{\rho_n<-\sigma_n}\right]
&= 
\sigma_n^6\tilde{\E}_n\left[T^21_{\rho_n<-\sigma_n}\right]+2\sigma_n^3\tilde{\E}_n\left[TR_n1_{\rho_n<-\sigma_n}\right]+\tilde{\E}_n\left[R_n^21_{\rho_n<-\sigma_n}\right]
\\
&\leq 
\sigma_n^6\tilde{\E}_n\left[T^21_{\rho_n<-\sigma_n}\right]+2\sigma_n^3\sqrt{\tilde{\E}_n[T^2]\tilde{\E}_n[R_n^2]}+2\E[R_n^2].
\end{align*}
We have already established that the last two terms decay faster than $\sigma_n^{-6}$. The first term can be bounded using the H\"older and Markov inequalities by
\begin{align*}
\sigma_n^6\tilde{\E}_n\left[T^21_{\rho_n<-\sigma_n}\right]
&\leq 
\sigma_n^6\tilde{\E}_n[T^{2+\epsilon}]^{\frac{2}{2+\epsilon}}\PP[\rho_n<-\sigma_n]^{\frac{\epsilon}{2+\epsilon}}
\\
&\leq 
\sigma_n^6\tilde{\E}_n[T^{2+\epsilon}]^{\frac{2}{2+\epsilon}}\left(\frac{1}{\sigma_n^2}\tilde{\E}_n[\rho_n^2]\right)^{\frac{\epsilon}{2+\epsilon}}
\\
&= 
\sigma_n^{6+\frac{4\epsilon}{2+\epsilon}}\tilde{\E}_n[T^{2+\epsilon}]^{\frac{2}{2+\epsilon}}\left(\frac{1}{\sigma_n^6}\tilde\E_n[\rho_n^2]\right)^{\frac{\epsilon}{2+\epsilon}}.
\end{align*}
Analogously as before $\tilde{\E}_n[T^{2+\epsilon}]\to\E[T^{2+\epsilon}]<\infty$ by the Dominated convergence theorem and Assumption~\ref{ass1} i) and we have already established that $\sigma_n^{-6}\tilde{\E}_n[\rho_n^2]\to \theta^2$. Together these establish that
$$
\lim_{n\to\infty}\sigma_n^{-6}\tilde{\E}_n\left[\rho_n^21_{\rho_n<-\sigma_n}\right]=0
$$
which completes the proof.

\end{proof}

\begin{proof}[Proof of Theorem \ref{thm:ESJD}]
We fix $g$ and $\mu$ that satisfy Assumption~\ref{ass1} and suppress the notation with respect to them for the rest of the proof. 
First assume $\lim_{n\to\infty}n^{1/6}\sigma_n=0$, which implies that
\begin{equation}\label{eq:aux1}
    \sigma_n^{-2}\E\left[(Y_{n,1}-X_{n,1})^2\right]= 
    \sigma_n^{-2}\int_{\R}\int_{\R}Z_{\sigma_n}^{-1}(x)g(e^{\phi'(x)w\sigma_n})w^2\mu\left(dw\right)\pi(x)dx
     \to
    1
\end{equation}
as $n\to \infty$.  Convergence to one follows by the Dominated convergence theorem as $\sigma_n\to 0$. The dominating bound is provided by Assumption~\ref{ass1} iii). Hence, $\E\left[(Y_{n,1}-X_{n,1})^2\right]$ decays as $\sigma_n^2$ which is by definition faster than $n^{-1/3}$. Because the acceptance rate is bounded above by one, we must have that $\lim_{n\to\infty}n^{1/3}\mathcal{E}_n=0$.

Next assume $\lim_{n\to\infty}n^{1/6}\sigma_n=\infty$. First we will establish that
\begin{equation}\label{eq:aux2}
    \E\left[(Y_{n,1}-X_{n,1})^2\left(1\wedge e\right)\left(\sum_{i=1}^n\rho_{n,i}\right)\right]
    \leq 
    2\E\left[(Y_{n,1}-X_{n,1})^2\right]\E\left[\left(1\wedge e\right)\left(\sum_{i=2}^n\rho_{n,i}\right)\right]\,.
\end{equation}
The statement (without the two) is clear if $\rho_{n,1}$ is negative. When it is positive then using Lemma~\ref{lem:de-exponentialisation} with respect to the log-Metropolis--Hastings random variable $\rho_{n,1}$ and $h(x,y)=(y-x)^2$ gives
\begin{align*}
    \E\left[(Y_{n,1}-X_{n,1})^2\left(1\wedge e\right)\left(\sum_{i=1}^n\rho_{n,i}\right)1_{\rho_{n,1}\geq 0}\right]
    &\leq 
    \E\left[(Y_{n,1}-X_{n,1})^2e^{\rho_{n,1}}
    \cdot\left(1\wedge e\right)\left(\sum_{i=2}^n\rho_{n,i}\right)1_{[0,\infty)}\left(\rho_{n,1})\right)\right]
    \\
    &= 
    \E\left[(Y_{n,1}-X_{n,1})^2\left(1\wedge e\right)\left(\sum_{i=2}^n\rho_{n,i}\right)1_{[0,\infty)}\left(-\rho_{n,1}\right)\right]
    \\
    &\leq 
    \E\left[(Y_{n,1}-X_{n,1})^2\right]\E\left[\left(1\wedge e\right)\left(\sum_{i=2}^n\rho_{n,i}\right)\right]\,.
\end{align*}
Adding terms corresponding to the sign of $\rho_{n,1}$ implies \eqref{eq:aux2}.

We still need to bound the acceptance rate. To do this we split the probability space regarding the event $\mathcal{A}_n:=\{ \sum_{i=1}^n\rho_{n,i}\leq \sum_{i=2}^n\E[\rho_{n,i}]/2 \}$.
On the set where this holds we have, using Theorem~\ref{thm:log-MH} and Theorem~\ref{thm:CLT}
\begin{align*}
\E\left[(1\wedge e)\left(\sum_{i=1}^n\rho_{n,i}\right)1_{\mathcal{A}_n}\right]
&\leq 
\exp\left( \frac{1}{2}\sum_{i=1}^n\E[\rho_{n,i}] \right)
\\
&\leq 
\exp\left( -\frac{n}{4}\E[\rho^2_{n,1}]\left(1-\frac{\E[\rho_{n,1}]+\frac{1}{2}\E[\rho^2_{n,1}]}{\E[\rho^2_{n,1}]}\right) \right)
\\
&\leq 
\exp\left( -\frac{n}{4}\sigma_n^6\left(\sigma^{-6}_n\E[\rho^2_{n,1}]\right)\left(1-\frac{\E[\rho_{n,1}]+\frac{1}{2}\E[\rho^2_{n,1}]}{\E[\rho^2_{n,1}]}\right) \right)
\\
&\leq 
\exp\left( -\frac{\theta^2}{8}n\sigma_n^6 \right)
\end{align*}
for all large enough $n\in\N$.  On the complement $\mathcal{A}_n^c$ using Markov's inequality gives
\begin{align*}
\E\left[(1\wedge e)\left(\sum_{i=1}^n\rho_{n,i}\right)1_{\mathcal{A}^c_n}\right]
&\leq 
\mathbb{P}\left[\mathcal{A}_n^c\right]
\\
&=  
\mathbb{P}\left[\sum_{i=1}^n\left(\rho_{n,i}-\E[\rho_{n,i}]\right)>-\frac{1}{2}\sum_{i=1}^n\E[\rho_{n,i}]\right]
\\
&\leq 
\frac{4\text{Var}[\rho_{n,1}^2]}{n\E[\rho_{n,1}]^2}
\\
&\leq 
\frac{16}{n\E[\rho_{n,1}^2]\left(1-4\frac{\E[\rho_{n,1}]+\frac{1}{2}\E[\rho_{n,1}^2]}{\E[\rho_{n,1}^2]}\right)}
\\
&\leq 
\frac{32}{\theta^2n\sigma_n^6}
\end{align*}
for all large enough $n\in\N$.

Together with \eqref{eq:aux1} and \eqref{eq:aux2} these bounds imply
\[
\E\left[(Y_{n,1}-X_{n,1})^2\left(1\wedge e\right)\left(\sum_{i=1}^n\rho_{n,i}\right)\right]
 \leq   
2\sigma_n^2\left(e^{-\frac{\theta^2}{8}n\sigma_n^6}+\frac{32}{\theta^2n\sigma_n^6}\right)
\leq 
C\frac{1}{n^{1/3}}\frac{1}{n^{2/3}\sigma_n^4}
\]
for an appropriate constant $C>0$. Since $n\sigma_n^6\to\infty$, this decay rate is faster than $n^{-1/3}$, meaning $\mathcal{E}_n$ also decays faster than $n^{-1/3}$.

Now assume, $\lim_{n\to\infty}n^{1/6}\sigma_{n,\ell}=\ell $. Splitting the expectation gives
\begin{multline*}
    \E\left[(Y_{n,1}-X_{n,1})^2\left(1\wedge e\right)\left(\sum_{i=1}^n\rho_{n,i}\right)\right]
     = 
    \E\left[(Y_{n,1}-X_{n,1})^2\right]\left[\left(1\wedge e\right)\left(\sum_{i=2}^n\rho_{n,i}\right)\right]
    \\+ 
    \E\left[(Y_{n,1}-X_{n,1})^2\left(\left(1\wedge e\right)\left(\sum_{i=1}^n\rho_{n,i}\right)-\left(1\wedge e\right)\left(\sum_{i=2}^n\rho_{n,i}\right)\right)\right]\,.
\end{multline*}
For the first term on the right-hand side we have by \eqref{eq:aux1} and Theorem~\ref{thm:CLT} that
\[
n^{1/3}\E\left[(Y^{(n)}_1-X_1)^2\right]\left[\left(1\wedge e\right)\left(\sum_{i=2}^n\rho_{\sigma_{n,\ell},i}\right)\right]
 \to
h(\ell)
\]
as $n\to\infty$.  The second term vanishes. To see this note that by Theorem~\ref{thm:CLT} and the fact that the function $t\mapsto 1\wedge e^t$ is $1$-Lipschitz and bounded imply that the expression
\[
    \left(\left(1\wedge e\right)\left(\sum_{i=1}^n\rho_{\sigma_{n,\ell},i}\right)-\left(1\wedge e\right)\left(\sum_{i=2}^n\rho_{\sigma_{n,\ell},i}\right)\right)
\]
converges to zero in the $L^2$ sense, and therefore also in probability. On the other hand \eqref{eq:aux1} and Assumption~\ref{ass1} iii) imply that the random variables $n^{1/3}(Y_{n,1}-X_{n,1})^2$ have an integrable dominating bound. Their product is therefore uniformly integrable and converges to zero in probability, meaning it is also true that
\[
n^{1/3}\E\left[(Y_{n,1}-X_{n,1})^2\left(\left(1\wedge e\right)\left(\sum_{i=1}^n\rho_{n,i}\right)-\left(1\wedge e\right)\left(\sum_{i=2}^n\rho_{n,i}\right)\right)\right]
 \to
0
\]
as $n\to\infty$.  

Finally we will optimize over the choice of $\ell$. The function $h(\ell)=2\ell^2\Phi(-\ell^3\theta/2)$ is smooth in $\ell$ and converges to zero both when $\ell\to 0$ and when $\ell \to \infty$. Hence, its maximum is attained at a stationary point. Setting $s=\ell^3\theta/2$ we can find the stationary points of $s\mapsto 2\left(\frac{2}{\theta}\right)^{2/3}s^{2/3}\Phi(-s)$. This corresponds to finding the solution of the equation $2/3 = s\varphi_N(-s)/\Phi(-s)$, where $\varphi_N$ denotes the standard Gaussian probability density function. There exists a unique solution $s^*$, as the function $s\mapsto s\varphi_N(-s)/\Phi(-s)$ is strictly increasing, meaning  $h(\ell)$ attains its maximal value at a specific value $\ell^*$ satisfying  $s^*=(\ell^*)^3\theta/2$, corresponding to an average acceptance rate of $2\Phi(-s^*)$, which turns out to numerically equal to $57.4\%$ to three decimal places. This is also implies that
\[
h(\ell^*)
 = 
\theta^{-2/3}\cdot 2^{5/3}(s^*)^{2/3}\Phi(-s^*)
 \approx 
0.651637\times \theta^{-2/3},
\]
which completes the proof.
\end{proof}





\section{Technical results}

\begin{lemma}\label{lem:de-exponentialisation}
Let $\rho$ be a log-Metropolis--Hastings random variable associated with a probability measure $\pi$ and a Markov kernel $Q$ on $(\mathbb{X},\mathcal{F})$. Let $f: \mathbb{X}\to\R$ and $h: \mathbb{X}\times \mathbb{X}\to\R$ be such that $h(X,Y)=h(Y,X)$. Then the following are true:
\begin{enumerate}[(i)]
    \item $-\rho(X,Y)=\rho(Y,X)\,.$
    \item If the integrals are finite, then setting $h=h(X,Y)$ and $\rho = \rho(X,Y)$
\[
\E\left[h f(\rho e^{\rho})\right]
=
\E\left[h f(-\rho)\right]\,.
\]
\end{enumerate}
\end{lemma}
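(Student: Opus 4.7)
My plan is to handle the two parts separately, using only the definition of $\rho$ as a log-Radon--Nikodym derivative together with the Tierney decomposition recalled in Section~\ref{subsec:framework}. Throughout I will read the identity in part (ii) as $\E[hf(\rho)e^{\rho}]=\E[hf(-\rho)]$, which is the form required for the application inside the proof of Theorem~\ref{thm:ESJD}.

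For part (i), I would argue directly from the definition. On the symmetric core set $\mathcal{R}\times\mathcal{R}$ from Proposition 1 of \cite{tierney1998note}, the Radon--Nikodym derivative $t(x,y)=\pi(dy)Q(y,dx)/\pi(dx)Q(x,dy)$ is strictly positive and finite, and interchanging the arguments simply inverts it, giving $t(y,x)=1/t(x,y)$. Taking logarithms yields $\rho(y,x)=-\rho(x,y)$ on $\mathcal{R}\times\mathcal{R}$, while off this set both sides equal $0$ by the convention in Definition~\ref{def:log-MH}, so the identity holds globally.

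For part (ii), the plan is a one-step change-of-measure manipulation. Writing
\[
\E[hf(\rho)e^{\rho}]=\int h(x,y)\,f(\rho(x,y))\,e^{\rho(x,y)}\,\pi(dx)Q(x,dy),
\]
on the core set $\mathcal{R}\times\mathcal{R}$ the factor $e^{\rho(x,y)}$ converts $\pi(dx)Q(x,dy)$ into $\pi(dy)Q(y,dx)$ by the very definition of $\rho$. I would then swap the dummy variables $x\leftrightarrow y$, using symmetry of $\mathcal{R}\times\mathcal{R}$ together with the assumed symmetry $h(x,y)=h(y,x)$, and invoke part (i) to replace $\rho(y,x)$ with $-\rho(x,y)$. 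The resulting integral is exactly $\E[hf(-\rho)]$ restricted to the core set. The finiteness hypothesis on the integrals justifies Fubini and the measure-change step.

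The main obstacle, though really bookkeeping rather than substantive analysis, is reconciling the contributions from the complement of $\mathcal{R}\times\mathcal{R}$. By Tierney's decomposition, on this complement at most one of the two product measures $\pi(dx)Q(x,dy)$ and $\pi(dy)Q(y,dx)$ carries mass, so under the convention $\rho\equiv 0$ there the integrands on both sides of (ii) reduce pointwise to $h\,f(0)$ and the expectations agree automatically. Combining the core-set identity with this matching on the complement delivers the full claim.
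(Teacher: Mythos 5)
Your proof is correct and rests on exactly the same change-of-measure-plus-relabelling argument that underlies Proposition~3 of the cited reference, which is all the paper offers here (it merely notes that the symmetric factor $h$ can be carried through that derivation, as you do explicitly); your handling of the complement of $\mathcal{R}\times\mathcal{R}$ via the convention $\rho=0$ is also the right way to close the bookkeeping. Your reading of the display in (ii) as $\E[hf(\rho)e^{\rho}]=\E[hf(-\rho)]$ is the intended one --- the printed $f(\rho e^{\rho})$ is a typo --- and it matches how the lemma is invoked in the proof of Theorem~\ref{thm:ESJD}.
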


\begin{proof}[Proof of Lemma \ref{lem:de-exponentialisation}]
The proof is given Proposition~3 of \cite{vogrinc2021counterexamples}, with the only minor difference that we carry a symmetric function $h$ through the entire derivation.
\end{proof}

\bibliographystyle{plain}
\bibliography{paper-ref}

\end{document}